\newtheorem{theorem}{Theorem}
\newtheorem{example}{Example}
\newtheorem{lemma}{Lemma}
\newtheorem{proposition}{Proposition}
\newtheorem{assumptions}{Assumptions}
\newtheorem{remark}{Remark}
\newenvironment{proof}[1][Proof]{\noindent\textbf{#1.} }{\ \rule{0.5em}{0.5em}}
\def\lessim{\ \lower4pt\hbox{$\buildrel{\displaystyle <}\over\sim$}\ }
\begin{document}
\begin{center}
{\LARGE Simultaneous sparse model selection and coefficient estimation for heavy-tailed autoregressive processes }\vskip15pt

\bigskip  Hailin Sang$^{a}$ and Yan Sun$^{b}$

\bigskip 

 $^{a}$ Department of Mathematics, The University of Mississippi \\University,  MS 38677, USA. Email: sang@olemiss.edu
 \bigskip
 
 $^{b}$ Department of Mathematics and Statistics, Utah State University \\Logan, UT 84322, USA. Email: yan.sun@usu.edu

\end{center}

\begin{abstract}
We propose a sparse coefficient estimation and automated model selection procedure for autoregressive (AR) processes with heavy-tailed innovations based on penalized conditional maximum likelihood. Under mild moment conditions on the innovation processes, the penalized conditional maximum likelihood estimator (PCMLE) satisfies a strong consistency, $O_P(N^{-1/2})$ consistency, and the oracle properties, where N is the sample size. We have the freedom in choosing penalty functions based on the weak conditions on them. Two penalty functions, least absolute shrinkage and selection operator (LASSO) and smoothly clipped average deviation (SCAD), are compared. The proposed method provides a distribution-based penalized inference to AR models, which is especially useful when the other estimation methods fail or under perform for AR processes with heavy-tailed innovations (see \cite{Resnick}). A simulation study confirms our theoretical results. At the end, we apply our method to a historical price data of the US Industrial Production Index for consumer goods, and obtain very promising results.

\end{abstract}

keywords: autoregressive process; causality; heavy tails; penalized maximum likelihood estimation; oracle properties; strong consistency\\

MSC 2010 subject classification: 62M10; 60G10; 60F05

\section{Introduction}\label{intro}

The autoregressive (AR)($p$) process is one of the most fundamental time series models that have been extensively studied and applied in different fields. One major role that AR models play in the analysis of time series is the use of autoregressive representation of a stationary time series. While theoretically, such representation ``will give answers to many problems'' (\cite{Akaike69}), in practice, however, any AR process, being an approximation to what is observed in reality, must allow for an arbitrary magnitude of the order $p$, in order to achieve a satisfying approximation (see e.g. \cite{Poskitt07}). Autoregressive moving average (ARMA) process is one of such ``stationary time series'' that can be represented by an infinite order AR process. Inferences to the ARMA models are usually made by fitting a long-order AR model to the data, which is viewed as a truncation of the AR($\infty$) representation. See \cite{Shibata80}, \cite{GalbraithZinde-Walsh97}, \cite{GalbraithZinde-Walsh01}, \cite{IngWei05}, among others. Moreover, the need for long range dependency in the economic and financial data analysis also calls for the application of long-order AR processes. For instance, the autoregression-based approximation to the autoregressive fractionally integrated moving average (ARFIMA) processes is considered as an efficient and desirable method to make inferences of the long-memory ARFIMA models. See \cite{GalbraithZinde-Walsh01} 
and \cite{Poskitt07}, 
for a partial list of references.  Nevertheless, traditional model selection procedures based on criteria such as FPE \cite{Akaike73}, AIC \cite{Akaike70} and BIC \cite{Schwarz} are not efficient in fitting long order AR processes, especially when the AR process has a sparse structure.

In this paper, we propose an automated and efficient model selection procedure which is based on penalized conditional maximum likelihood for AR processes. The shrinkage estimators have a long history. See \cite{Stein81}, \cite{Ralescu92} and \cite{Brandwein93}, for examples. Technically, such estimators could obtain the shrinkage feature via the minimization of a loss function plus a penalty term, with the loss function being the least squares or the negative log likelihood in the usual cases. The existence of a suitably chosen penalty induces zero elements in the estimates, resulting in a simultaneous model selection procedure, while the parameters are being estimated. In the past two decades, a great deal of literature has been devoted to investigating such techniques, and a large number of penalty functions have been proposed including LASSO \cite{Tibshirani96}, SCAD \cite{FanLi01}, adaptive LASSO \cite{Zou06}. See \cite{FanLi02}, \cite{FanLi04}, \cite{HunterLi05}, \cite{YuanLin07}, and \cite{wuliu09} for a partial list of references. Although these techniques have been thoroughly studied and widely applied in the independent data settings, their performances in the time series context have not been studied very much.  Among the few existing relevant works, \cite{Huang07} considered a LASSO penalized least squares (PLS) method for a linear regression model with autoregressive errors, which was later extended in \cite{Yoon12} by allowing the penalty function to be chosen from LASSO, SCAD, and Bridge. \cite{Nardi11} studied the LASSO PLS for AR($p$) processes particularly under the \textquotedblleft double asymptotic\textquotedblright  framework, which means the order $p$ and the sample size go to infinity simultaneously. In all the aforementioned works the authors use the least squares method. 
In this paper, we propose a penalized sparse estimation for AR ($p$) models and thus develop a new model selection procedure. Based on the conditional likelihood, our PCMLE is especially useful when the time series model has heavy-tailed innovations. This is striking since the regular methods fail or under perform for AR processes with heavy-tailed innovations \cite{Resnick}.
Asymptotic properties of our PCMLE, regarding both estimation accuracy and model selection consistency, are investigated under the general conditional likelihood framework and mild conditions for the innovations and the penalty functions. 

Our theoretical results are two-fold. First, we give strong consistency of the PCMLE in Theorem \ref{strongconsistence} under weak conditions on the innovations and the penalty functions. In particular, we only require the sequence of penalty functions to be uniformly equicontinuous and converging to zero. The conditional maximum likelihood estimators with either LASSO or SCAD penalties enjoy the strong consistency. Second, we show that under certain regularity conditions on the innovations and the penalty functions, including the existence of the fourth moment of the innovations, the PCMLE of the coefficients are $N^{-1/2}$ consistent in probability. Furthermore, we derive the what have been known as ``oracle properties'' in the literature in Theorem \ref{thm2} for this $N^{-1/2}$ consistent PCMLE: 1) The coefficients whose true values are zero are estimated to be exactly zero with probability going to one. This property, referred to as sparsity, guarantees that the optimal model will be chosen with probability going to one. 2) The PCMLEs for the non-zero coefficients satisfy a multivariate central limit theorem, which states that asymptotically the estimated non-zero coefficients obtain the same efficiency as if the true sparse structure were known in advance. This immediately relaxes the constraint on the magnitude of the order $p$, as enlarging $p$ will no longer bring in proportionally more burden on the estimation efficiency. The PCMLE with SCAD penalty, but not with LASSO penalty, have the oracle properties. All these properties are confirmed by simulations with Gaussian and non-Gaussian innovations. Finally, we give a detailed discussion and rule of thumb on how the sample size should be adjusted, in order to minimize small sample risk and achieve optimal performances. 

In this paper, we shall use the following conventions: the notation $||\cdot||$ is used for the  $L_2$ norm; the notation $\Rightarrow$ denotes weak convergence; $X_n=o_P(1)$ is used for the convergence to zero in probability; and the bold face letters denote vectors. Besides, 
we denote by $X_n=O_P(1)$ a sequence of random variables $\{X_n\}$ bounded in probability (see e.g., Definition 3.3, \cite{Wooldridge02}). Throughout the paper, we assume the order $p$ is fixed, and does not increase with sample size N. 
 
The rest of the paper is organized as follows: Section 2 formally introduces our methodology and results. We discuss the performances of the PCMLE with two popular penalties, LASSO and SCAD, in Section 3. Simulation results are reported in Section 4, which include simulations with both Gaussian and non-Gaussian innovations. We demonstrate our method with a real data analysis in Section 5, which shows improved performances over the traditional MLE and FPE based model selection. We finish with a conclusion in Section 6. Proofs of our results are collected in Section 7. Useful lemmas and their proofs are deferred to the Appendix.

\section{Main results}
\label{results}
In this paper we study the PCMLE of the AR($p$) model
\begin{equation}\label{arp}
X_t=\phi_1X_{t-1}+\cdots+\phi_pX_{t-p}+Z_t.
\end{equation}
Let $\Theta$ be the space of parameter vectors $\boldsymbol{\theta}=(\phi_1,\cdots,\phi_p)^T$, $\boldsymbol{\theta_0}=(\phi_{1,0},\cdots,\phi_{p,0})^T$ be the underlying parameter vector, and $\sigma(X_{t-1},\cdots,X_{t-p})$ be the $\sigma$-algebra generated by the random variables $X_{t-1},\cdots,X_{t-p}$. 
Denote by $f_t(x):=f(x|\sigma(X_{t-1}, \cdots, X_{t-p});\boldsymbol{\theta})$ the conditional density function of $X_t$ given $X_{t-1},\cdots,X_{t-p}$.  Given observation\textbf{s} $X_1, \cdots, X_N$,  the conditional log likelihood function $L(\boldsymbol{\theta})$ is
\begin{eqnarray}
L(\boldsymbol{\theta})&:=&L(X_1, \cdots, X_N|\boldsymbol{\theta})\nonumber\\
&:=& \log \prod_{t=p+1}^N f_t(X_t)=\sum_{t=p+1}^N \log f_t(X_t):=\sum_{t=p+1}^N l_t(\boldsymbol{\theta}).\nonumber
\end{eqnarray}
Here we take the convention $\log 0=0$. As in the literature, the PCMLE of $\boldsymbol{\theta}$ is defined as
\begin{equation}\label{pmle}
\hat{\boldsymbol{\theta}}:=\hat{\boldsymbol{\theta}}_{\lambda_N}:=\text{argmax}_{\boldsymbol{\theta}\in\Theta}\{L(\boldsymbol{\theta})-NP_{\lambda_N}(\boldsymbol{\theta})\},
\end{equation}
where $P_{\lambda_N}(\boldsymbol{\theta})$ is a penalty function and $\lambda_N$ is a tuning parameter. Further, denote $$Q(\boldsymbol{\theta}^T):=Q(\boldsymbol{\theta}):=L(\boldsymbol{\theta})-NP_{\lambda_N}(\boldsymbol{\theta}).$$
We will make the following assumptions for all the results in this section.
\begin{assumptions} \label{as1}
\begin{enumerate}
\item The innovations $Z, \{Z_t\}_{-\infty}^{+\infty}$ are independent and identically distributed random variables (i.i.d.) with zero mean and variance $\sigma^2<\infty$.
\item $\Phi(z):= 1-\phi_{1,0}z-\cdots-\phi_{p,0}z^p\ne 0$ for all $z\in \mathbb{C}$ such that $|z|\le 1$.
\end{enumerate}
\end{assumptions}
Under the conditions from the first part, the second part of Assumptions \ref{as1} is equivalent to the causality of the time series AR($p$), i.e., there exists a sequence of constants $\{a_i\}$ such that $\sum_{i=0}^\infty |a_i|<\infty$ and  $X_t=\sum_{i=0}^\infty a_iZ_{t-i}$ (Theorem 3.1.1, \cite{BrockwellDavis87}). It is clear that this time series is weakly and strictly stationary with $EX_t=0$. Denote the autocovariance function by $\gamma(h)=Cov(X_t,X_{t+h})=EX_tX_{t+h}$.  \\
Let $g(z)$ be the density function of $Z$.  Observe that $f_t(x)=g(x-\sum_{j=1}^{p}\phi_jX_{t-j})$. Therefore, 
\begin{equation}\label{FtoG}
f_t(X_t)=g(X_t-\sum_{j=1}^{p}\phi_jX_{t-j})\;\;\text{and}\;\;l_t(\boldsymbol{\theta})=\log g(X_t-\sum_{j=1}^{p}\phi_jX_{t-j}).
\end{equation}
Especially,  
\begin{equation}\label{Ltheta0}
l_t(\boldsymbol{\theta}_0)=\log g(X_t-\sum_{j=1}^{p}\phi_{j,0}X_{t-j})=\log g(Z_t).
\end{equation}
 The next theorem gives the conditions such that the PCMLE $\hat{\boldsymbol{\theta}}_{\lambda_N}$ has strong consistency. 
\begin{theorem}\label{strongconsistence}
Assume that the parameter vector space $\Theta$ is compact, $g(z)$ is continuous and $E|\log g(Z)|<\infty$. Further, we assume that $\{P_{\lambda_N}(\boldsymbol{\theta})\}$ are uniformly equicontinuous in $\Theta$ and $P_{\lambda_N}(\boldsymbol{\theta})\rightarrow 0$ as $N\rightarrow \infty$ for each $\boldsymbol{\theta}\in\Theta$. Then under Assumptions \ref{as1}, $\hat{\boldsymbol{\theta}}_{\lambda_N}$ converges to $\boldsymbol{\theta}_0$ almost surely.
\end{theorem}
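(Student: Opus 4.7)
The plan is to follow the classical Wald-type consistency proof for M-estimators, adapted to the stationary ergodic time series setting, with the penalty playing a negligible role. First I would use the causality half of Assumption \ref{as1} to write $X_t = \sum_{i=0}^{\infty} a_i Z_{t-i}$, so that $\{X_t\}$ is a stationary ergodic sequence as a measurable functional of the i.i.d.\ innovations $\{Z_t\}$. Consequently, for each fixed $\boldsymbol{\theta}\in\Theta$, the sequence $l_t(\boldsymbol{\theta}) = \log g(X_t - \sum_{j=1}^{p}\phi_j X_{t-j})$ is stationary ergodic, and Birkhoff's ergodic theorem yields $N^{-1} L(\boldsymbol{\theta}) \to M(\boldsymbol{\theta}) := E\, l_{p+1}(\boldsymbol{\theta})$ almost surely, once $M(\boldsymbol{\theta})$ is shown to be well defined.

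Next I would upgrade this pointwise convergence to uniform a.s.\ convergence on the compact $\Theta$. The standard Wald step is: for each $\boldsymbol{\theta}_\ast\in\Theta$ and $\epsilon>0$, pick a neighborhood $U$ of $\boldsymbol{\theta}_\ast$ with $E\sup_{\boldsymbol{\theta}\in U} l_t(\boldsymbol{\theta}) \le M(\boldsymbol{\theta}_\ast)+\epsilon$ (using continuity of $g$ and monotone convergence), cover $\Theta$ by finitely many such neighborhoods, and apply the ergodic theorem to each supremum; this will be carried out in a lemma, plausibly deferred to the appendix. In parallel, uniform equicontinuity of $\{P_{\lambda_N}\}$ and pointwise vanishing on the compact set $\Theta$ give, by an Arzel\`a--Ascoli argument, $\sup_{\boldsymbol{\theta}\in\Theta} P_{\lambda_N}(\boldsymbol{\theta})\to 0$. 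Together these show $N^{-1}Q(\boldsymbol{\theta})$ converges uniformly a.s.\ to $M(\boldsymbol{\theta})$ on $\Theta$.

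Then I would carry out the Kullback--Leibler identification step. Writing $X_t - \sum_j \phi_j X_{t-j} = Z_t + \sum_j (\phi_{j,0}-\phi_j)X_{t-j}$, conditioning on $\sigma(X_{t-1},\ldots,X_{t-p})$, and applying Jensen's inequality to $\log$ of the ratio of conditional densities, one obtains $M(\boldsymbol{\theta})\le M(\boldsymbol{\theta}_0) = E\log g(Z)$, with equality iff the conditional densities coincide almost surely, i.e.\ $\sum_j (\phi_j-\phi_{j,0})X_{t-j}=0$ a.s. Because causality together with $\sigma^2>0$ forces the Toeplitz covariance matrix of $(X_{t-1},\ldots,X_{t-p})$ to be strictly positive definite (Yule--Walker), this identity forces $\boldsymbol{\theta}=\boldsymbol{\theta}_0$, so $M$ attains its unique maximum at $\boldsymbol{\theta}_0$. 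A routine argmax argument then closes the proof: since $N^{-1}Q(\hat{\boldsymbol{\theta}}_{\lambda_N}) \ge N^{-1}Q(\boldsymbol{\theta}_0)$, uniform convergence gives $M(\hat{\boldsymbol{\theta}}_{\lambda_N}) \to M(\boldsymbol{\theta}_0)$ a.s., and compactness plus the well-separated maximum at $\boldsymbol{\theta}_0$ forces $\hat{\boldsymbol{\theta}}_{\lambda_N}\to\boldsymbol{\theta}_0$ almost surely.

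The main obstacle I anticipate is the integrability needed to make the Wald localization work: bounding $E\sup_{\boldsymbol{\theta}\in U}|l_t(\boldsymbol{\theta})|$ from the bare assumptions that $g$ is continuous and $E|\log g(Z)|<\infty$. Controlling the positive part of $\log g(Z_t+Y_t(\boldsymbol{\theta}))$, with $Y_t(\boldsymbol{\theta})=\sum_j(\phi_{j,0}-\phi_j)X_{t-j}$, uniformly over small $U$ will likely require splitting $\log g$ into its positive and negative parts and exploiting that $g$ is bounded on compact sets together with moment control of the $X_{t-j}$'s coming from stationarity and $EZ^2<\infty$. Everything else — ergodicity, Jensen/KL, the argmax reduction, and the uniform vanishing of the penalty — is standard once this integrability lemma is in place.
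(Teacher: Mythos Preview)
Your plan is sound and uses the same three ingredients as the paper --- ergodicity of $\{X_t\}$ from the causal representation, the Kullback--Leibler identification $E l_t(\boldsymbol{\theta})\le E l_t(\boldsymbol{\theta}_0)$ with equality only at $\boldsymbol{\theta}_0$, and uniform vanishing of the penalty from equicontinuity --- but the architecture is different. The paper does \emph{not} establish full uniform convergence of $N^{-1}Q(\cdot)$ and then run an argmax argument. Instead it argues by contradiction in the style of Hafner--Preminger: assuming non-convergence, it works on the compact ``away set'' $\Lambda=\Theta\cap\{\|\boldsymbol{\theta}-\boldsymbol{\theta}_0\|\ge\eta\}$, extracts a convergent subsequence of maximizers, and uses only the one-sided ergodic bound $\limsup_N \sup_{\boldsymbol{\theta}\in\Lambda} N^{-1}\sum l_t(\boldsymbol{\theta})\le E\sup_{\boldsymbol{\theta}\in\Lambda} l_t(\boldsymbol{\theta})$ (Lemma \ref{supl}), together with the exact law of large numbers at $\boldsymbol{\theta}_0$ (Lemma \ref{lttheta}), to force $E l_t(\boldsymbol{\theta}_0)\le E\sup_{\Lambda} l_t(\boldsymbol{\theta})$ and reach a contradiction with the identification lemma. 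For the measurability of $\sup_{\boldsymbol{\theta}\in\Lambda} l_t(\boldsymbol{\theta})$ the paper invokes a result of Pfanzagl rather than the Wald localization you propose.

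The practical payoff of the paper's route is exactly the obstacle you flagged: since only an \emph{upper} ergodic bound on the supremum is needed (plus the exact limit at the single point $\boldsymbol{\theta}_0$), one avoids having to control $E\sup_{U}|l_t(\boldsymbol{\theta})|$ on both sides. Your Wald-style uniform-convergence argument is more transparent and gives a stronger intermediate conclusion, but it does require the two-sided integrability you correctly identified as the crux; the paper sidesteps half of that burden by the contradiction structure.
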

Usually the vector $\boldsymbol{\theta_0}=(\phi_{1,0},\cdots,\phi_{p,0})^T$ has some zero components. Without loss of generality, we assume that the underlying parameter vector $\boldsymbol{\theta_0}=(\phi_{1,0},\cdots,\phi_{p,0})^T$ has $s$ zeros and these zeros are the first $s$ parameters. Then we write 
$$\boldsymbol{\theta}^T_0=(\phi_{1,0},\cdots,\phi_{p,0})=(0,\cdots, 0, \phi_{s+1,0},\cdots,\phi_{p,0}):=(\boldsymbol{0}^T, \boldsymbol{\theta}^T_{0,1}):=(\boldsymbol{\theta}^T_{0,0}, \boldsymbol{\theta}^T_{0,1}).$$
With the same rearrangement,  $\boldsymbol{\theta}^T=(\phi_1,\cdots,\phi_p):=(\boldsymbol{\theta}^T_{1,0}, \boldsymbol{\theta}^T_{1,1})$.\\
The results in the rest of this section need the following extra assumptions. 
\begin{assumptions} \label{as2}
\begin{enumerate}
\item $Z$ has a finite fourth moment.
\item $E\frac{ (g''(Z))^2}{ g^2(Z)}<\infty$ and $E\frac{(g'(Z))^4}{ g^4(Z)}<\infty$.
\item $\left(\frac{g'}{g}\right)''(z)<B$ uniformly for some constant $B$. 
\end{enumerate}
\end{assumptions}
Besides, $E\frac{(g'(Z))^2}{ g^2(Z)}<\infty$ if $E\frac{(g'(Z))^4}{ g^4(Z)}<\infty$.  We denote 
\begin{equation}\label{cg}
C(g):=E\frac{(g'(Z))^2}{ g^2(Z)}.
\end{equation}
\begin{example}
In the important case $Z\sim N(0,1)$, $g'(z)=-zg(z)$ and $g''(z)=(z^2-1)g(z)$. Therefore, $C(g)=E\frac{(g'(Z))^2}{ g^2(Z)}=EZ^2=\sigma^2<\infty$. $E\frac{ (g''(Z))^2}{ g^2(Z)}=E(Z^2-1)^2<\infty$ . $\left(\frac{g'}{g}\right)''(z)=0$. $E\frac{(g'(Z))^4}{ g^4(Z)}=EZ^4<\infty$. We only require the existence of the fourth moment of the innovation in the assumptions. Therefore our results are good for AR processes with heavy tails also. For example, the t distributions with degree of freedom $df>4$ satisfy all the conditions in Assumptions \ref{as2}. The algebra is tedious but routine. 
\end{example}
\noindent In the following propositions and theorems, the penalty function $P_{\lambda_N}(\boldsymbol{\theta})$ has the form
$$P_{\lambda_N}(\boldsymbol{\theta})=\sum_{i=1}^p p_{\lambda_N}(|\phi_i|).$$
\begin{assumptions}\label{as3}
The assumptions on the penalty $p_{\lambda_N}(|\phi|)$ are
\begin{enumerate}
\item $\lambda_N\rightarrow 0$, $\sqrt{N}\lambda_N\rightarrow\infty$ as $N\rightarrow \infty$ and 
$\liminf_{N\rightarrow\infty}\liminf_{\phi\rightarrow 0^+} p'_{\lambda_N}(|\phi|)/\lambda_N>0;$
\item $p_{\lambda_N}(\phi)\ge 0$, $p_{\lambda_N}(0)=0$, $a_N=\max\{|p_{\lambda_N}'(|\phi_{i,0}|)|: \phi_{i,0}\ne 0\}\rightarrow 0$, $\max\{|p_{\lambda_N}''(|\phi_{i,0}|)|: \phi_{i,0}\ne 0\}\rightarrow 0$ as $N\rightarrow\infty$ and $p_{\lambda_N}'''$ exists and is bounded. 
\end{enumerate}
\end{assumptions}
\begin{proposition}\label{sparsity}
Assume Assumptions \ref{as1}, \ref{as2} and part 1 of Assumptions \ref{as3}. With probability tending to $1$, for any given $\boldsymbol{\theta}_{1,1}$ with $||\boldsymbol{\theta}_{1,1}-\boldsymbol{\theta}_{0,1}||=O_p(N^{-1/2})$, we have
$$Q(\boldsymbol{0}^T, \boldsymbol{\theta}^T_{1,1})^T=\max_{||\boldsymbol{\theta}_{1,0}||\le CN^{-1/2}}Q(\boldsymbol{\theta})$$
for some constant $C$.
\end{proposition}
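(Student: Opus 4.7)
The plan is to establish sparsity by a first-order argument: I will show that with probability tending to one, for every $i\in\{1,\ldots,s\}$ and every $\phi_i$ with $0<|\phi_i|\le CN^{-1/2}$ (holding the remaining coordinates of $\boldsymbol{\theta}_{1,0}$ fixed inside the ball and $\boldsymbol{\theta}_{1,1}$ fixed with $\|\boldsymbol{\theta}_{1,1}-\boldsymbol{\theta}_{0,1}\|=O_P(N^{-1/2})$), the partial derivative $\partial Q/\partial\phi_i$ has sign opposite to $\phi_i$. Because the constraint set is convex and contains $\boldsymbol{0}$ in its interior, this forces the coordinate-wise maximizer of $Q$ on the ball to satisfy $\phi_i=0$ for all $i\in\{1,\ldots,s\}$, i.e., to equal $\boldsymbol{\theta}_{1,0}=\boldsymbol{0}$.

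From (\ref{FtoG}) and the additive form of $P_{\lambda_N}$,
\[
\frac{\partial Q}{\partial\phi_i}(\boldsymbol{\theta})=-\sum_{t=p+1}^N X_{t-i}\,\frac{g'\bigl(X_t-\sum_{j=1}^p\phi_j X_{t-j}\bigr)}{g\bigl(X_t-\sum_{j=1}^p\phi_j X_{t-j}\bigr)}-N\,p'_{\lambda_N}(|\phi_i|)\,\mathrm{sgn}(\phi_i).
\]
I would Taylor-expand the score about $\boldsymbol{\theta}_0$. The zeroth-order term $-\sum_{t=p+1}^N X_{t-i}\,g'(Z_t)/g(Z_t)$ is a martingale-difference sum: since $Z_t$ is independent of $\sigma(X_{t-1},\ldots,X_{t-p})$, the standard score identity gives $E[g'(Z_t)/g(Z_t)\mid\sigma(X_{t-1},\ldots,X_{t-p})]=\int g'=0$, and its conditional variance equals $X_{t-i}^2\,C(g)$ with $C(g)<\infty$ by (\ref{cg}) and $EX_{t-i}^2=\gamma(0)<\infty$; a martingale CLT then yields an $O_P(\sqrt{N})$ bound. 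The first-order remainder $\sum_{j=1}^p(\phi_j-\phi_{j,0})\sum_t X_{t-i}X_{t-j}\,(g'/g)'(\epsilon^*_t)$, evaluated at intermediate residuals $\epsilon^*_t$, is handled by Taylor-expanding $(g'/g)'(\epsilon^*_t)$ about $Z_t$: Assumption \ref{as2} part 3 bounds $(g'/g)''$, the finite fourth moment of $Z$ propagates via the causal $MA(\infty)$ representation to $EX_t^4<\infty$, and ergodicity together with $E[(g'/g)'(Z)^2]<\infty$ (which follows from parts 1--2 of Assumptions \ref{as2} since $(g'/g)'=g''/g-(g'/g)^2$) yield $\sum_t X_{t-i}X_{t-j}(g'/g)'(Z_t)=O_P(N)$. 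Combined with $\|\boldsymbol{\theta}-\boldsymbol{\theta}_0\|=O_P(N^{-1/2})$, the remainder is $O_P(\sqrt{N})$, so $\partial L/\partial\phi_i=O_P(\sqrt{N})$ uniformly on the ball.

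By part 1 of Assumptions \ref{as3}, there is $c>0$ such that $p'_{\lambda_N}(|\phi|)\ge c\lambda_N$ whenever $|\phi|$ is sufficiently small, which holds on $|\phi_i|\le CN^{-1/2}$ for large $N$. The penalty term in $\partial Q/\partial\phi_i$ is thus at least $cN\lambda_N$ in magnitude, and since $\sqrt{N}\lambda_N\to\infty$ it dominates the $O_P(\sqrt{N})$ score; hence $\mathrm{sgn}(\partial Q/\partial\phi_i)=-\mathrm{sgn}(\phi_i)$ with probability tending to one, and a union bound over the $s$ zero indices completes the proof. The main obstacle will be the Hessian estimate: because $(g'/g)'$ is not itself assumed bounded, one must substitute its value at $Z_t$ (using Assumption \ref{as2} part 3 to control the error) and then invoke a law of large numbers for the stationary ergodic sequence $X_{t-i}X_{t-j}(g'/g)'(Z_t)$, uniformly for $\boldsymbol{\theta}$ on the shrinking $N^{-1/2}$-ball.
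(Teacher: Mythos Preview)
Your proposal is correct and follows essentially the same route as the paper: a sign-of-partial-derivative argument in which the score contribution is shown to be $O_P(\sqrt{N})$ while the penalty term is of exact order $N\lambda_N$ and therefore dominates. The only organizational difference is that the paper takes a second-order Taylor expansion of $\partial L/\partial\phi_j$ about $\boldsymbol{\theta}_0$ (evaluating the Hessian directly at $\boldsymbol{\theta}_0$, which yields $(g'/g)'(Z_t)$ immediately, and placing the Lagrange remainder at third order via Lemma~\ref{3product}), whereas you stop at first order and then re-expand the Hessian at the intermediate point; both reductions land on the same three estimates, and the paper bounds the score term by a direct $L_2$/Chebyshev argument rather than the martingale CLT you invoke.
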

This proposition gives the sparsity of the $N^{-1/2}$ consistent estimator, i.e., the coefficients whose true values are zero are estimated to be exactly zero with probability tending to $1$. The next proposition is useful to provide the $N^{-1/2}$ consistent PCMLE.
\begin{proposition}\label{weakconsistence}
Assume Assumptions \ref{as1}, \ref{as2} and part 2 of Assumptions \ref{as3}. Then there exists a local maximizer $\hat{\boldsymbol{\theta}}$ of $Q(\boldsymbol{\theta})$ such that $||\hat{\boldsymbol{\theta}}-\boldsymbol{\theta}_0||=O_P(N^{-1/2}+a_N)$, with probability going to one, where $a_N$ is defined as in Assumptions \ref{as3}.   
\end{proposition}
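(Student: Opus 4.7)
The plan is to adapt the classical Fan--Li (2001) argument for the existence of a $\sqrt N$-consistent penalized-likelihood maximizer, adjusted to the conditional-likelihood setup of an AR($p$) process. Setting $\alpha_N = N^{-1/2} + a_N$, it suffices to show that for any $\varepsilon>0$ there exists a constant $C>0$ such that
\begin{equation*}
\mathbb{P}\!\left\{\sup_{\|\mathbf u\|=C} Q(\boldsymbol\theta_0+\alpha_N\mathbf u) < Q(\boldsymbol\theta_0)\right\} \ge 1-\varepsilon
\end{equation*}
for all $N$ large, because $Q$ is continuous and this event guarantees a local maximizer inside the ball $\{\boldsymbol\theta_0+\alpha_N\mathbf u:\|\mathbf u\|\le C\}$, which is an $O_P(\alpha_N)$-neighborhood of $\boldsymbol\theta_0$.

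The next step is to Taylor-expand $L$ and $P_{\lambda_N}$ about $\boldsymbol\theta_0$. Writing $Z_t(\boldsymbol\theta)=X_t-\sum_j\phi_j X_{t-j}$, one computes
\begin{equation*}
\nabla l_t(\boldsymbol\theta_0) = -\frac{g'(Z_t)}{g(Z_t)}\,\mathbf X_{t-1},\qquad \nabla^2 l_t(\boldsymbol\theta_0) = \left(\frac{g'}{g}\right)'\!(Z_t)\,\mathbf X_{t-1}\mathbf X_{t-1}^T,
\end{equation*}
with $\mathbf X_{t-1}=(X_{t-1},\dots,X_{t-p})^T$. Because $Z_t$ is independent of $\mathbf X_{t-1}$ with $E[g'(Z)/g(Z)]=0$, the summands of $\nabla L(\boldsymbol\theta_0)$ are stationary ergodic martingale differences with finite variance (Assumption~2), so $N^{-1/2}\nabla L(\boldsymbol\theta_0)=O_P(1)$. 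Using $E[(g'/g)'(Z)]=-C(g)$ and ergodicity, $N^{-1}\nabla^2 L(\boldsymbol\theta_0)\to -C(g)\,\Gamma_p$ almost surely, where $\Gamma_p=E[\mathbf X_{t-1}\mathbf X_{t-1}^T]$ is positive definite by causality. The third-order remainder is controlled via the uniform bound $(g'/g)''\le B$ (Assumption~2, part 3) together with $E\|\mathbf X_{t-1}\|^3<\infty$ (from the fourth-moment condition on $Z$), giving a remainder of order $N\alpha_N^3\cdot\|\mathbf u\|^3=o_P(N\alpha_N^2)$.

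For the penalty piece, Assumption~3(2) ($p_{\lambda_N}(0)=0$, $p_{\lambda_N}\ge0$) immediately gives
\begin{equation*}
-N\!\!\sum_{i:\,\phi_{i,0}=0}\![p_{\lambda_N}(|\alpha_N u_i|)-0] \le 0,
\end{equation*}
so the zero-coordinate contribution can only help the desired inequality. For the nonzero coordinates, a second-order Taylor expansion combined with $|p'_{\lambda_N}(|\phi_{i,0}|)|\le a_N$, $|p''_{\lambda_N}(|\phi_{i,0}|)|=o(1)$, and boundedness of $p'''_{\lambda_N}$ yields
\begin{equation*}
\big|N[P_{\lambda_N}(\boldsymbol\theta_0+\alpha_N\mathbf u)-P_{\lambda_N}(\boldsymbol\theta_0)]\big| \le \sqrt{p}\,Na_N\alpha_N\|\mathbf u\| + o(N\alpha_N^2\|\mathbf u\|^2) + O(N\alpha_N^3),
\end{equation*}
and $Na_N\alpha_N\le N\alpha_N^2$ since $\alpha_N\ge a_N$. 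Putting everything together,
\begin{equation*}
Q(\boldsymbol\theta_0+\alpha_N\mathbf u)-Q(\boldsymbol\theta_0)= O_P(\sqrt N\alpha_N)\|\mathbf u\| - \tfrac12 N\alpha_N^2\,C(g)\,\mathbf u^T\Gamma_p\mathbf u\,(1+o_P(1)) + O_P(N\alpha_N^2)\|\mathbf u\|,
\end{equation*}
and since $\sqrt N\alpha_N\le N\alpha_N^2$, the strictly negative quadratic term of order $N\alpha_N^2\|\mathbf u\|^2$ dominates for $\|\mathbf u\|=C$ with $C$ sufficiently large, yielding the claim.

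I expect the main obstacle to be the control of the stochastic linear term and the cubic remainder on the likelihood side: one needs the ergodic theorem and a CLT/tightness bound for sums involving the heavy-tailed-friendly score $g'(Z_t)/g(Z_t)\cdot X_{t-i}$, and a uniform cubic bound on $\nabla^3 l_t$ that survives only finite fourth moments on $Z$. These are precisely the points where Assumption~2's moment conditions and the uniform bound on $(g'/g)''$ are used, and the argument will likely be packaged as a lemma in the appendix.
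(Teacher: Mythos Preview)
Your proposal is correct and follows essentially the same route as the paper: the Fan--Li ball argument with $\alpha_N=N^{-1/2}+a_N$, a third-order Taylor expansion of $L$ at $\boldsymbol\theta_0$, dropping the zero-coordinate penalty increments by $p_{\lambda_N}\ge0$, $p_{\lambda_N}(0)=0$, and comparing orders so that the negative-definite Hessian term $-\tfrac12 C(g)\,\mathbf u^T\Gamma_p\mathbf u\,N\alpha_N^2$ dominates for large $C$. You even anticipated correctly that the score bound $\nabla L(\boldsymbol\theta_0)=O_P(N^{1/2})$, the Hessian limit $N^{-1}\nabla^2 L(\boldsymbol\theta_0)\to -C(g)\Gamma_p$, and the cubic remainder control via the uniform bound on $(g'/g)''$ are exactly the ingredients the paper isolates (in the proof of Proposition~\ref{sparsity} and Lemma~\ref{3product}) and reuses here.
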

Under Assumptions \ref{as3}, if the quantity $a_N$ defined in Assumptions \ref{as3} satisfies $a_N=O(N^{-1/2})$, the local maximizer $\hat{\boldsymbol{\theta}}$ of $Q(\boldsymbol{\theta})$ in Proposition \ref{weakconsistence} is a $N^{-1/2}$ consistent PCMLE of $\boldsymbol{\theta}$. Therefore, 
from Proposition \ref{sparsity}, this estimator has sparsity, i.e., with probability tending to $1$ the estimates of the zero coefficients are zeros, $\hat{\boldsymbol{\theta}}_{1,0}=\textbf{0}$. We list this conclusion as the first part of the following theorem. Further,  we show that the estimates of the non-zero coefficients satisfy an asymptotic normality in the second part of this theorem.  

\begin{theorem}\label{thm2}
Let $\Gamma$ be the non-negative definite ${(p-s)\times (p-s)}$ matrix with the entry $\Gamma(l,m)=\gamma(m-l)$, $1\le l, m\le p-s$.
Denote $$\Delta=\text{diag}\left\{p''_{\lambda_N}(|\phi_{s+1,0}|),\cdots, p''_{\lambda_N}(|\phi_{p,0}|)\right\},$$
$$\boldsymbol{b}=(p'_{\lambda_N}(|\phi_{s+1,0}|)sgn(\phi_{s+1,0}),\cdots,p'_{\lambda_N}(|\phi_{p,0}|)sgn(\phi_{p,0}))^T.$$
Assume $a_N=O(N^{-1/2})$. Under assumptions \ref{as1}, \ref{as2} and \ref{as3}, the local  maximizer $\hat{\boldsymbol{\theta}}=(\hat{\boldsymbol{\theta}}_{1,0}^T, \hat{\boldsymbol{\theta}}_{1,1}^T)^T$ of $Q(\boldsymbol{\theta})$ satisfies 
\begin{enumerate}
\item $\hat{\boldsymbol{\theta}}_{1,0}=\textbf{0}$,
\item $\sqrt{N}[(C(g)\Gamma+\Delta)(\hat{\boldsymbol{\theta}}_{1,1}-\boldsymbol{\theta}_{0,1})+\boldsymbol{b}]\Rightarrow N(0, C(g)\Gamma).$
\end{enumerate}
Here the constant $C(g)$ is given by (\ref{cg}).
\end{theorem}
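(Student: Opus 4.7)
The plan is to combine the preceding propositions for part 1 and then set up a Taylor-expansion-of-the-score argument for part 2. Since the hypothesis $a_N=O(N^{-1/2})$ is in force, Proposition \ref{weakconsistence} furnishes a local maximizer $\hat{\boldsymbol{\theta}}$ with $\|\hat{\boldsymbol{\theta}}-\boldsymbol{\theta}_0\|=O_P(N^{-1/2})$, and then Proposition \ref{sparsity} applied to this estimator immediately gives $\hat{\boldsymbol{\theta}}_{1,0}=\mathbf{0}$ with probability tending to one. That disposes of part 1 and reduces part 2 to analyzing the behavior of $\hat{\boldsymbol{\theta}}_{1,1}$, which on this event satisfies the first-order condition for the surviving block,
\begin{equation*}
\nabla_{\boldsymbol{\theta}_{1,1}}L(\hat{\boldsymbol{\theta}})-N\,\nabla_{\boldsymbol{\theta}_{1,1}}P_{\lambda_N}(\hat{\boldsymbol{\theta}})=\mathbf{0}.
\end{equation*}

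Differentiating (\ref{FtoG}) gives $\partial l_t/\partial\phi_j=-X_{t-j}(g'/g)(X_t-\sum_k\phi_kX_{t-k})$, which at $\boldsymbol{\theta}_0$ reduces to $-X_{t-j}(g'/g)(Z_t)$. For $j\ge s+1$ these summands form a stationary martingale difference array with respect to $\mathcal{F}_{t-1}=\sigma(Z_{t-1},Z_{t-2},\ldots)$, because $Z_t$ is independent of the past and $\int g'(z)\,dz=0$; their conditional covariance is $C(g)\,X_{t-l}X_{t-m}$, so ergodicity of $\{X_t\}$ and a martingale central limit theorem (the Lindeberg condition being controlled by $E[(g'/g)^4(Z)]<\infty$ and $EX_t^4<\infty$, via Assumption \ref{as2}) yield
\begin{equation*}
\frac{1}{\sqrt{N}}\,\nabla_{\boldsymbol{\theta}_{1,1}}L(\boldsymbol{\theta}_0)\Rightarrow N(\mathbf{0},C(g)\Gamma).
\end{equation*}
A second differentiation gives $\partial^2 l_t/\partial\phi_j\partial\phi_k=X_{t-j}X_{t-k}(g'/g)'(Z_t)$ at $\boldsymbol{\theta}_0$, and since $E[(g'/g)'(Z)]=-C(g)$ (integration by parts on $\int g''$, combined with the definition of $C(g)$), the ergodic theorem yields $N^{-1}\nabla^2L(\boldsymbol{\theta}_0)\to -C(g)\Gamma$ almost surely. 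A two-term Taylor expansion of the penalty around $|\phi_{i,0}|$ for $i\ge s+1$ gives $\nabla_{\boldsymbol{\theta}_{1,1}}P_{\lambda_N}(\hat{\boldsymbol{\theta}})=\boldsymbol{b}+\Delta(\hat{\boldsymbol{\theta}}_{1,1}-\boldsymbol{\theta}_{0,1})+o_P(N^{-1/2})$, the cubic remainder being absorbed by the uniform bound on $p'''_{\lambda_N}$ in Assumption \ref{as3}. Substituting these expansions, dividing by $\sqrt{N}$, and rearranging yields
\begin{equation*}
\sqrt{N}\bigl[(C(g)\Gamma+\Delta)(\hat{\boldsymbol{\theta}}_{1,1}-\boldsymbol{\theta}_{0,1})+\boldsymbol{b}\bigr]=\frac{1}{\sqrt{N}}\,\nabla_{\boldsymbol{\theta}_{1,1}}L(\boldsymbol{\theta}_0)+o_P(1),
\end{equation*}
and Slutsky plus the martingale CLT gives the claimed distributional limit.

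The main technical hurdle is making the remainder control in the Taylor expansion of the score uniform on a shrinking $O_P(N^{-1/2})$ neighborhood of $\boldsymbol{\theta}_0$. The Hessian is itself a random average of stationary variates, so in addition to the pointwise ergodic limit one needs to show that $N^{-1}\nabla^2L(\boldsymbol{\theta}^*)=-C(g)\Gamma+o_P(1)$ uniformly for $\boldsymbol{\theta}^*$ in this neighborhood, and that the cubic remainder, which involves $(g'/g)''(Z_t^*)\,X_{t-i}X_{t-j}X_{t-k}$ evaluated at a residual $Z_t^*$ close to $Z_t$, is $o_P(\sqrt{N})$. This is exactly where Assumption \ref{as2}(3) (the uniform bound $B$ on $(g'/g)''$) combines with the fourth-moment hypothesis on $Z$ and the stationarity of $\{X_t\}$ to give the required bound on $E|X_{t-i}X_{t-j}X_{t-k}|$, controlling the remainder and closing the argument.
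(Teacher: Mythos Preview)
Your proposal is correct and follows essentially the same route as the paper: part 1 is obtained by combining Propositions \ref{weakconsistence} and \ref{sparsity}, and part 2 is derived by Taylor-expanding the score at $\boldsymbol{\theta}_0$, using the ergodic limit $N^{-1}\nabla^2 L(\boldsymbol{\theta}_0)\to -C(g)\Gamma$, controlling the cubic remainder via the uniform bound on $(g'/g)''$ together with finite third/fourth moments (the paper's Lemma \ref{3product}), Taylor-expanding the penalty to isolate $\boldsymbol{b}$ and $\Delta$, and then applying a martingale CLT (via Cram\'er--Wold) with the Lindeberg condition verified through $E[(g'/g)^4(Z)]<\infty$ and $EX_t^4<\infty$. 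The paper carries out exactly these steps, with the Lindeberg verification made explicit via a Cauchy--Schwarz/Chebyshev bound.
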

\begin{remark}

From the statistical inference point of view, we need to have estimations for the matrix $\Gamma$ and the constant $C(g)$ if the innovation density function $g(x)$ is unknown. Each of these estimations is an important independent research topic itself and possesses a great deal of works in the literature.  For the $(p-s)\times (p-s)$ fixed-dimensional non-negative matrix $\Gamma$, its entries $\gamma(h)$, or $\gamma(-h)$, are estimated consistently by the sample auto covariance function $\hat\gamma(h)=\frac{1}{N}\sum_{i=1}^{N-h}X_iX_{i+h}$, $0\le h\le p-s$.  See \cite{BrockwellDavis87} for more details. To keep the consistency while allowing the dimension of $\Gamma$ to grow with the sample size, banding or tapering is implemented. See  \cite{BickelLevina08}, \cite{XiaoWu12} and the references therein for the new development in this field. We have assumed that the innovation density $g(x)$ and therefore the constant $C(g)$ are known in the above theory we develop. In the case that $C(g)$ is unknown, in the literature one uses the residuals after fitting the AR model to estimate the innovation density function $g(x)$, its derivative $g'(x)$, and therefore $C(g)$, by standard density estimation methods such as the kernel method.  For the residual based innovation density estimation, see \cite{Robinson87}, \cite{Liebscher99}, \cite{MullerSchick05} and the references therein. 
\end{remark}
\section{Discussion}
\label{discussion}
In this section we discuss two popular penalties, SCAD (\cite{FanLi01}) and LASSO (\cite{Tibshirani96}). The Smoothly Clipped Average Deviation (SCAD) is defined by its first derivative as follows:
\begin{equation}\label{scad'} 
  p_{\lambda_N}'(|\phi|)=\lambda_N I(|\phi|\leq\lambda_N)+\frac{(a\lambda_N-|\phi|)_{+}}{a-1}I(|\phi|>\lambda_N),
\end{equation}
where  $a>2$ is the second tuning parameter. More precisely, 
\begin{equation}\label{scad}
\begin{split}
  p_{\lambda_N}(|\phi|) &=\lambda_N|\phi|I(|\phi|\leq\lambda_N)
  +(\frac{a\lambda_N}{a-1}|\phi|-\frac{|\phi|^2}{2(a-1)}-\frac{\lambda_N^2}{2(a-1)})I(\lambda_N<|\phi|<a\lambda_N)\\
  &+\frac{(a+1)\lambda_N^2}{2}I(|\phi|\geq a\lambda_N).
\end{split}
\end{equation} Further, 
\begin{equation}\label{scad''}
p_{\lambda_N}''(|\phi|)=-(a-1)^{-1}I(\lambda_N<|\phi|<a\lambda_N).
\end{equation}
The Least Absolute Shrinkage and Selection Operator (LASSO) is defined as the absolute value of the parameter  with a scaling parameter $\lambda_N$. That is, $p_{\lambda_N}=\lambda_N|\phi|$. For both LASSO and SCAD penalty, $\lambda_N\ge 0$.\\
\indent To have the  strong consistency as in Theorem \ref{strongconsistence},  we require that $\{P_{\lambda_N}(\boldsymbol{\theta})\}$ be uniformly equicontinuous and $P_{\lambda_N}(\boldsymbol{\theta})\rightarrow 0$ as $N\rightarrow \infty$ for each $\boldsymbol{\theta}\in\Theta$. This condition is satisfied simply by setting $\lambda_N\to 0$ for both SCAD and LASSO penalties. Therefore, PCMLE with either LASSO or SCAD penalty enjoys the strong consistency. \\
\indent Recall the definition of $a_N$ in Assumptions \ref{as3}. For LASSO, $a_N=\lambda_N$. To have Assumptions \ref{as3}, we need $N^{1/2}a_N\rightarrow \infty$. This is a contradiction with $a_N=O(N^{-1/2})$. Therefore, from what we have proved, there is not enough evidence to claim that the PCMLE (\ref{pmle}) with LASSO penalty has the oracle properties.  For SCAD, it is easy to verify that $p_{\lambda_N}(|\phi|)\ge 0$, $p_{\lambda_N}(0)=0$ by (\ref{scad}). From (\ref{scad'}), we have 
$$\liminf_{N\rightarrow\infty}\liminf_{\phi\rightarrow 0^+}p'_{\lambda_N}(|\phi|)/\lambda_N=\liminf_{N\rightarrow\infty} 1=1>0.$$ 
From (\ref{scad''}),  
\begin{equation}\label{max}
\max\{|p_{\lambda_N}''(|\phi_{i,0}|)|:\phi_{i,0}\ne 0\}=(a-1)^{-1}I(\lambda_N<|\phi_{i,0}|<a\lambda_N \;\;\text{for some}\; i)
\end{equation}
and $p_{\lambda_N}'''=0$. Besides, 
\begin{eqnarray}
a_N&=&\max\{|p_{\lambda_N}'(|\phi_{i,0}|)|: \phi_{i,0}\ne 0\}\notag\\
&=& \max\{\lambda_NI(\min_{|\phi_{i,0}|\ne 0}|\phi_{i,0}|\le \lambda_N), \max \frac{a\lambda_N-|\phi_{i,0}|}{a-1}I(\lambda_N<|\phi_{i,0}|<a\lambda_N)\}.\notag
\end{eqnarray}
Therefore, $a_N=0$ if $\min_{\phi_{i,0}\ne 0} |\phi_{i,0}|\ge a\lambda_N$. Otherwise, $a_N=O(\lambda_N)$. Hence, for the sequence $\{\lambda_N\}$ with  $\lambda_N\rightarrow 0$ and $N^{1/2}\lambda_N\rightarrow\infty$, $(\ref{max})=a_N=0$ if $N$ satisfies $\min_{\phi_{i,0}\ne 0} |\phi_{i,0}|\ge a\lambda_N$. But $\min_{\phi_{i,0}\ne 0} |\phi_{i,0}|\ge a\lambda_N$ is true eventually if $\lambda_N\rightarrow 0$. So the PCMLE with SCAD penalty has the oracle properties if $\lambda_N\rightarrow 0$ and $N^{1/2}\lambda_N\rightarrow\infty$.  
In practice, it is recommended to choose sample size $N$ with $\min_{\phi_{i,0}\ne 0} |\phi_{i,0}|\ge a\lambda_N$ after the sequence $\{\lambda_N\}$ is selected if one has the information on $\min_{\phi_{i,0}\ne 0} |\phi_{i,0}|$, which can be routinely obtained by traditional estimations like least squares or MLE.  $\lambda_N$ should be selected with $N^{1/2}\lambda_N\rightarrow\infty$ but can be close to $N^{-1/2}$. If so, sample size $N$ should be chosen with  $N^{-1/2}=o(\min_{\phi_{i,0}\ne 0} |\phi_{i,0}|)$ but possibly close to $(\min_{\phi_{i,0}\ne 0} |\phi_{i,0}|)^{-2}$.  

\section{Simulation study}
\label{simu}
In this section we look at the performances of the two penalties, SCAD  and LASSO,  by numerical experiments.
The simulations are two-fold. On one hand, we simulate data from AR($p$) models which contain only zero and ``large'' non-zero parameters. The non-zero parameters are ``large'' in the sense that they are well above the order of $O(N^{-1/2})$, and therefore have very little risk to be mistakenly shrunk to 0 by the penalty. The performances of MLE and PCMLE are compared. On the other hand, we also consider the cases when some ``small'' non-zero parameters are involved in the model. That is, some non-zero parameters are smaller than $O(N^{-1/2})$.  Just as we expected, the numerical results show no statistical difference between the zero parameter and the ``small'' non-zeros.

We get a preliminary estimate of the coefficients by the usual MLE, which is next used as the initial value for the PCMLE algorithm. In the literature, there are two algorithms to compute the PCMLE, both of which are based on polynomial approximations of the penalty functions and eventually lead to a modified Newton-Raphson algorithm. The earlier one is the Local Quadratic Approximation (LQA) method proposed in \cite{FanLi01}. This algorithm essentially iteratively uses the Ridge penalty which does not produce zero estimates \cite{FrankFriedman93}. In practice the zeros are picked out heuristically rather than by the algorithm. This is a drawback since a parameter stays at zero after it is determined to
be zero at some iteration. A later improvement for the LQA is the Local Linear Approximation (LLA) method proposed in \cite{Zou08}, which iteratively computes the PCMLE with sparsity. Furthermore, the employment of the LLA offers the convenience to take advantage of many standard LASSO algorithms, by which LLA is computationally much more efficient than LQA. We therefore choose LLA to compute our PCMLE, specifically, the one-step LLA sparse estimator proposed in \cite{Zou08}. The first 80 percent of the sample is used to compute the PCMLE of the coefficients, and the tuning parameters are chosen by maximizing the unpenalized likelihood on the remaining 20 percent of the sample. 

We first simulate the following AR(5) model with  sample size N=1000,
\begin{equation}\label{model_1}
  X_t=0.2X_{t-1}+0.2X_{t-3}+0.2X_{t-5}+Z_t.
\end{equation}
Here, the innovation process $\left\{Z_t\right\}$ is generated from standard normal distribution independently. Notice that $0.2$ is well above the threshold $O(N^{-1/2})$. It is also easy to verify that such a combination of coefficients satisfies part 2 of Assumptions \ref{as1}, the causality. Table \ref{tab:AR5_1} reports a detailed result comparing the performances of MLE, LASSO PCMLE, and SCAD PCMLE. The error refers to the $L_2$ norm of the difference between the estimated coefficients and their true values. The std refers to the standard error calculated by the sandwich formula \cite{FanLi01}. It is clear that the SCAD PCMLE detects the zero coefficients. The LASSO PCMLE fails to identify one zero coefficient. In addition, SCAD PCMLE has improved estimation errors and standard errors. 

We repeat the above process with $N=1000$ for 100 times independently, and the results are summarized in Table \ref{tab:AR5_100}. The probability to be identified as 0 is calculated by the sample portion of the 100 trials for each coefficient. The average bias is the absolute difference between the mean value of the 100 estimates and the corresponding true value. The LASSO PCMLE is relatively conservative in terms of sparsity. Consequently, approximately for only 1/3 of the 100 times does the LASSO PCMLE correctly identifies each of the zero coefficients. In comparison, this proportion increases to approximately 3/4 for SCAD PCMLE. Especially, the sample probabilities of correctly getting 2 zeros for LASSO and SCAD PCMLE are 0.2 and 0.61 respectively. The observed biases of SCAD PCMLE are also smaller than those of LASSO PCMLE. We calculate the sample probability to get 0 estimate out of 100 independent trials for the two zero coefficients $\phi_{2,0}$, $\phi_{4,0}$, respectively and simultaneously, at sample size N=1000, 1500, 2000, 2500, 3000, 3500, 4000, and draw it as a function of N in Figure 1. For SCAD PCMLE, the probability for each of the two zero coefficients increases from around 0.7 to almost 0.95, as sample size grows from 1000 to 4000. Whereas for LASSO PCMLE, this proportion mostly varies between 0.3 to 0.4, and does not increase significantly as sample size increases. The contrast is even sharper when looking at the probability of getting both zeros. For SCAD PCMLE, it increases from 0.51 (N=1000) to 0.9 (N=4000). However, for LASSO PCMLE, it merely fluctuates around 0.2, never reaching 0.3.

\begin{figure}[ht]
\centering
\includegraphics[ height=1.800in, width=2.400in]{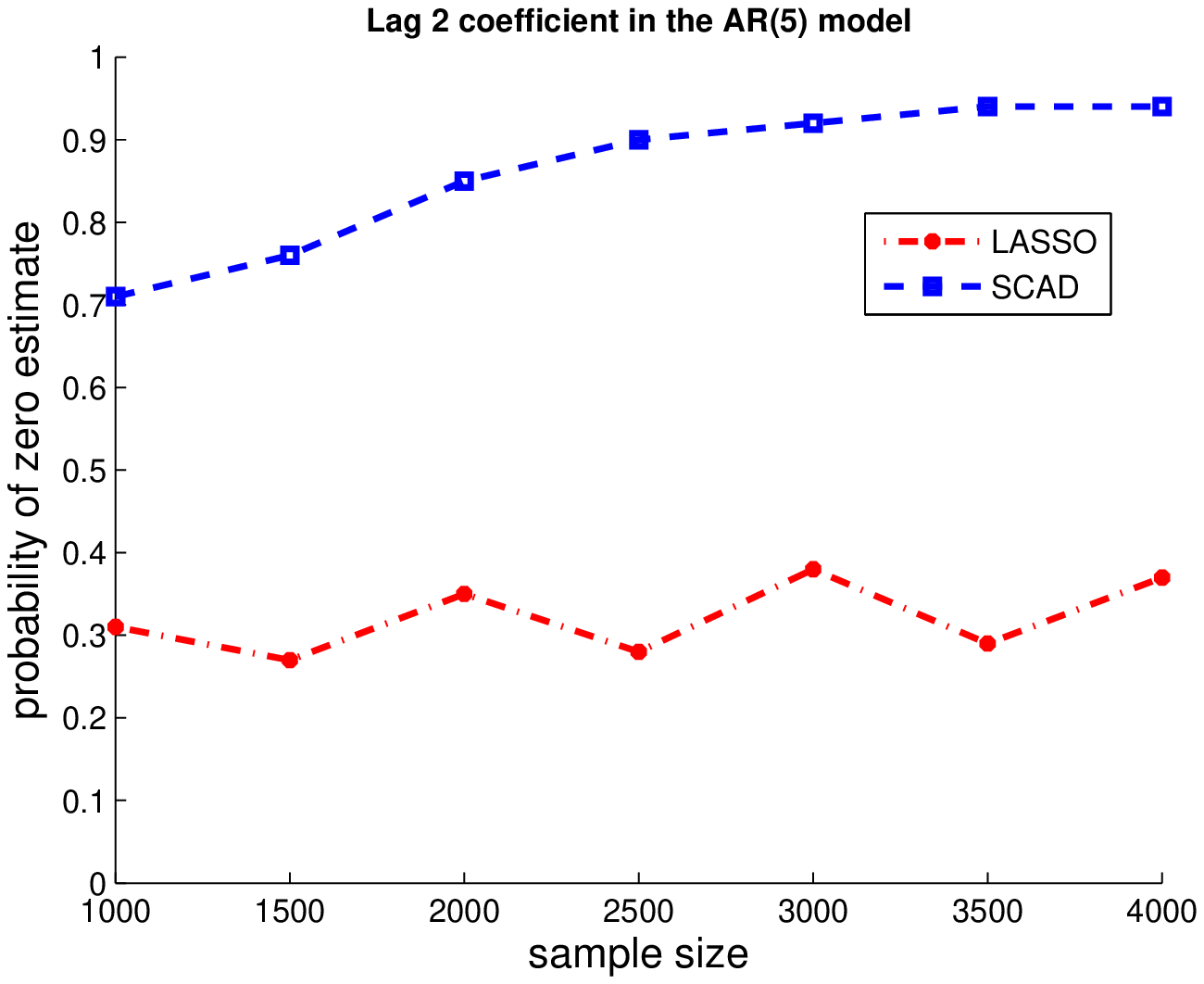}
\includegraphics[ height=1.800in, width=2.400in]{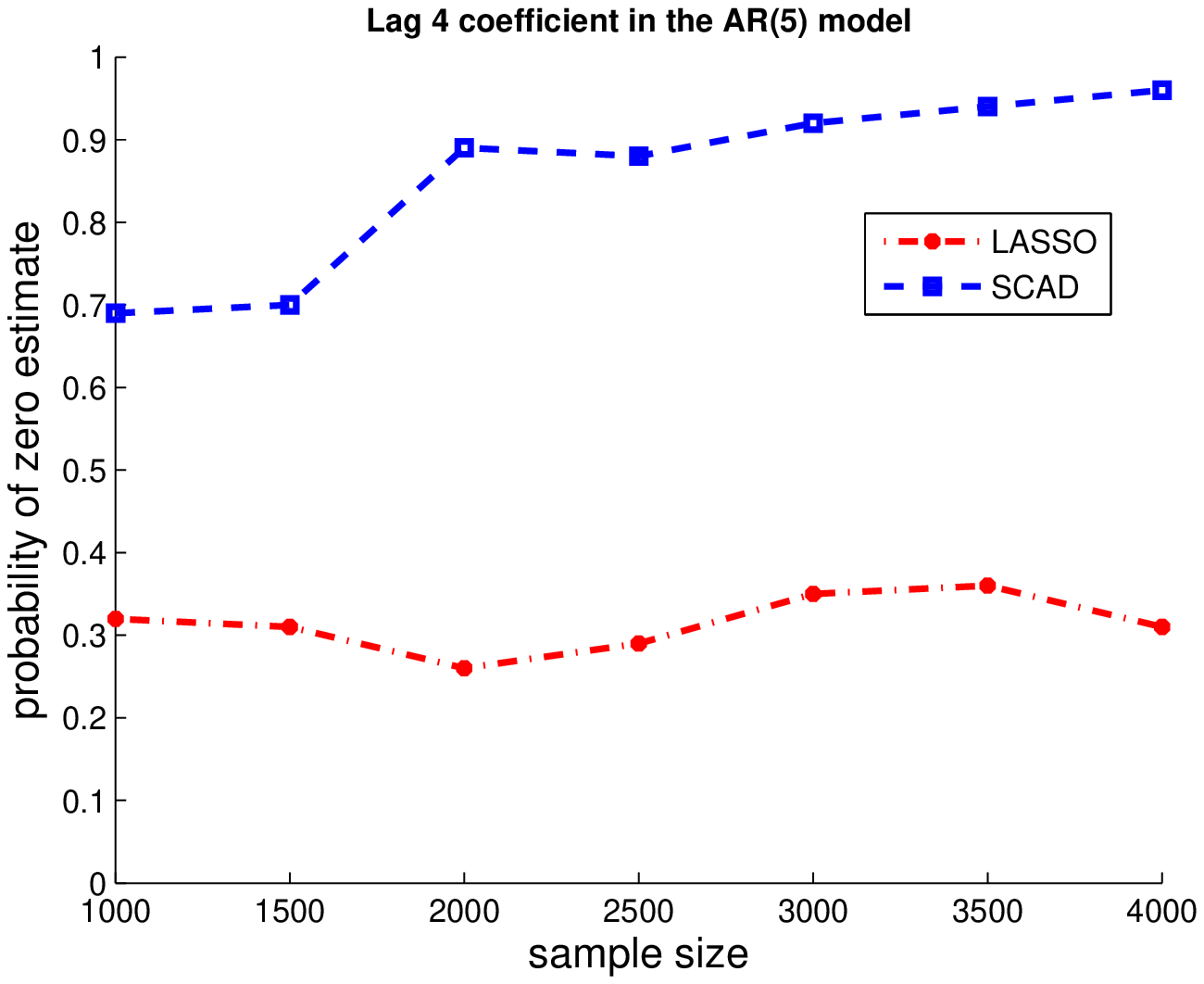}\\
\includegraphics[ height=1.800in, width=2.400in]{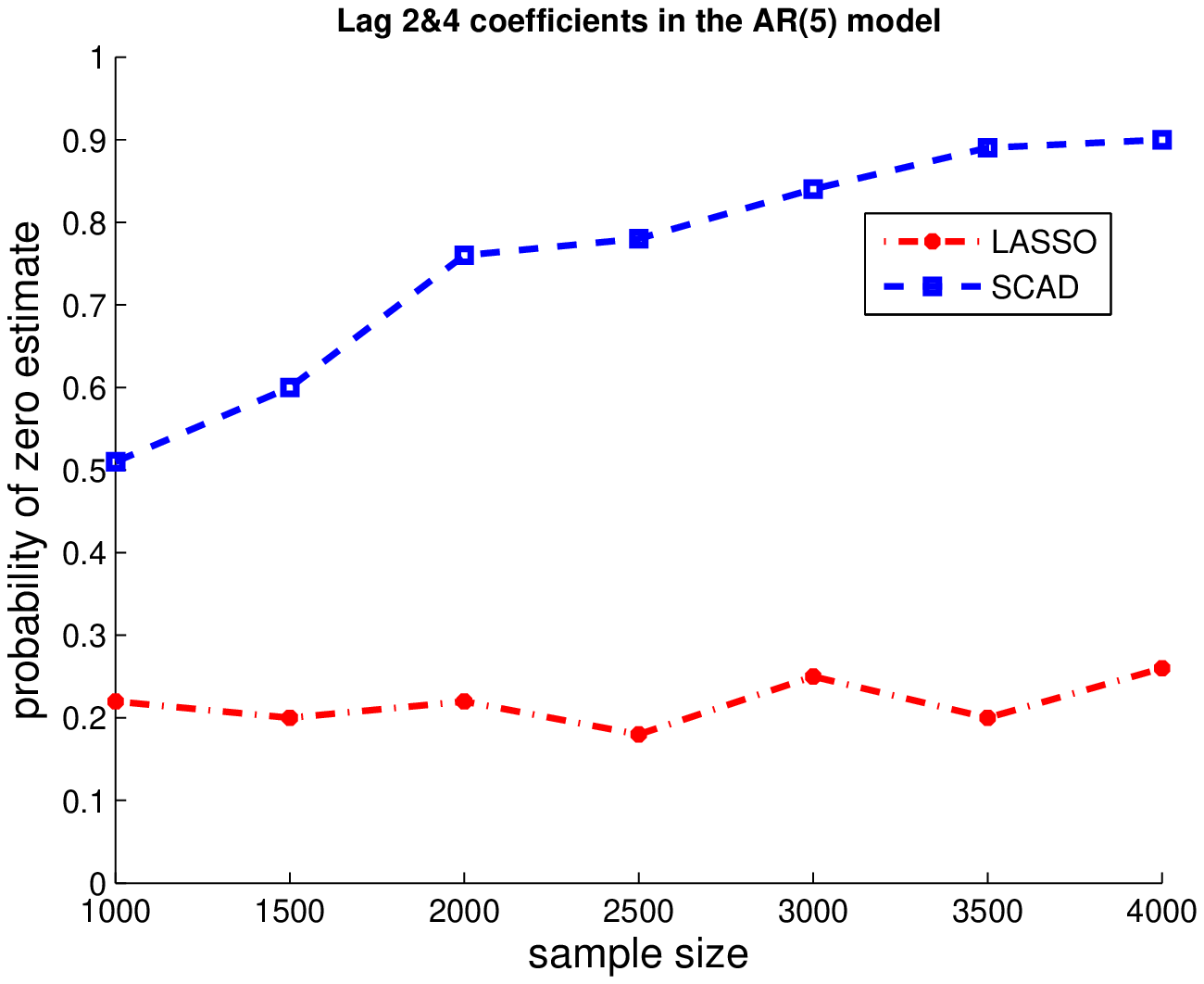}
\caption{The probability of zero estimates as a function of sample size for: 1) $\phi_{2,0}$ (upper left), 2) $\phi_{4,0}$ (upper right), 3) $\phi_{2,0}$ and $\phi_{4,0}$ simultaneously (lower), in model (\ref{model_1}).}
\label{fig:prob_model1}
\end{figure}

\indent We further consider the following model:
\begin{equation}\label{model_3}
    X_t=0.2X_{t-1}+N^{-3/4}X_{t-3}+\frac{1}{2}N^{-3/4}X_{t-5}+Z_t.
\end{equation}
That is, $\phi_{3,0}$ and $\phi_{5,0}$ now have order $O(N^{-3/4})$ for some fixed $N$, which is smaller than $O(N^{-1/2})$. By the foregoing discussion, they may not be detectable from the non-zeros by the PCMLE. Same as before, we carry out 100 independent experiments to estimate the coefficients in model (\ref{model_3}) using LASSO/SCAD PCMLE. Consistently, there is no problem with $\phi_{1,0}$. It is well above 0, and both LASSO and SCAD penalties distinguish it from 0 for the 100 experiments. Figure 2 plots the sample probability of zero estimates as a function of sample size for the other four coefficients. Notice that, statistically, there is no more difference between the two non-zero coefficients $\phi_{3,0}$, $\phi_{5,0}$ and $\phi_{2,0}$, $\phi_{4,0}$ shown in the plots. The four plots, referring to the four coefficients, look almost identical. 

\begin{figure}
\centering
\includegraphics[ height=1.500in, width=2.300in]{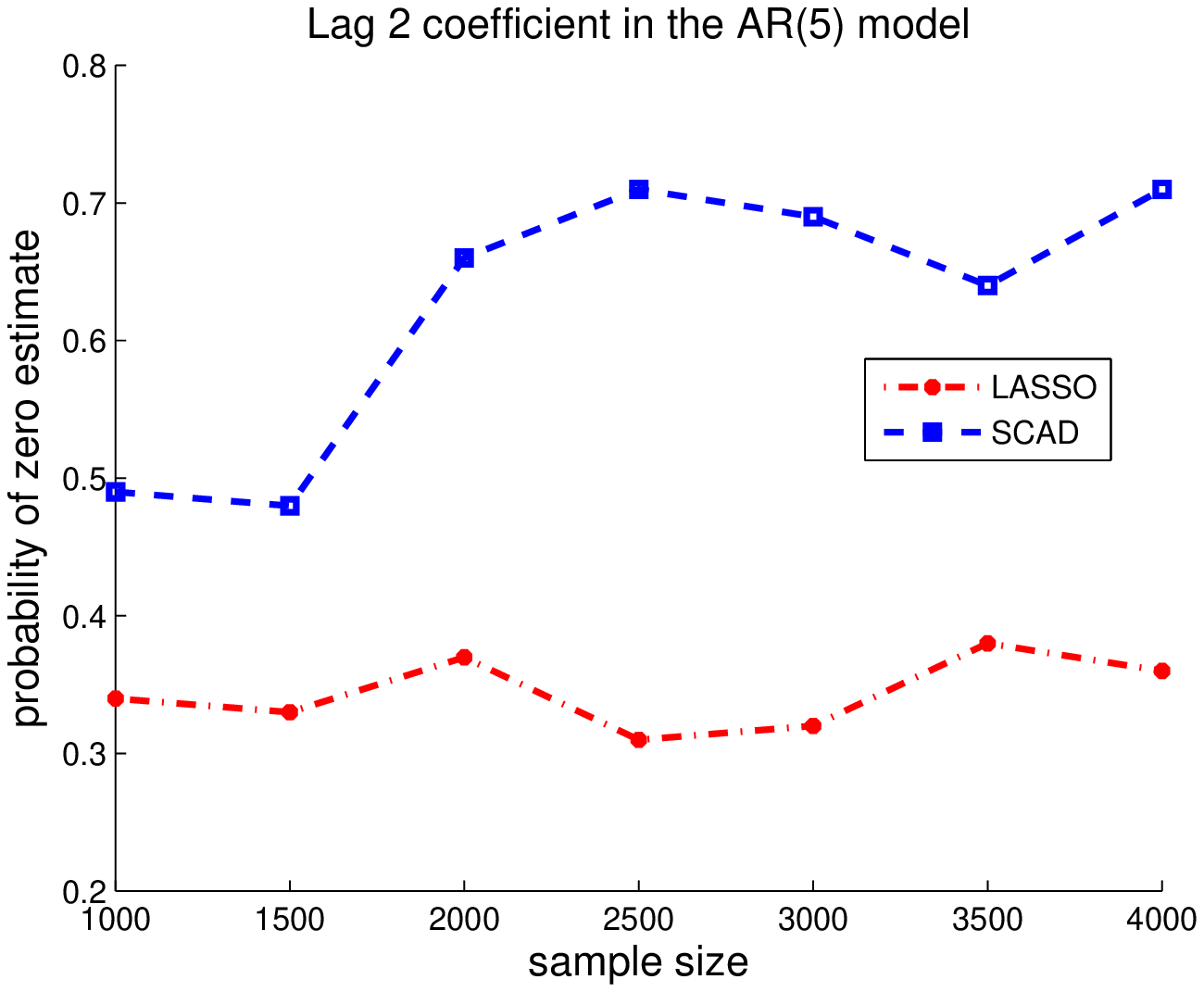}
\includegraphics[ height=1.500in, width=2.300in]{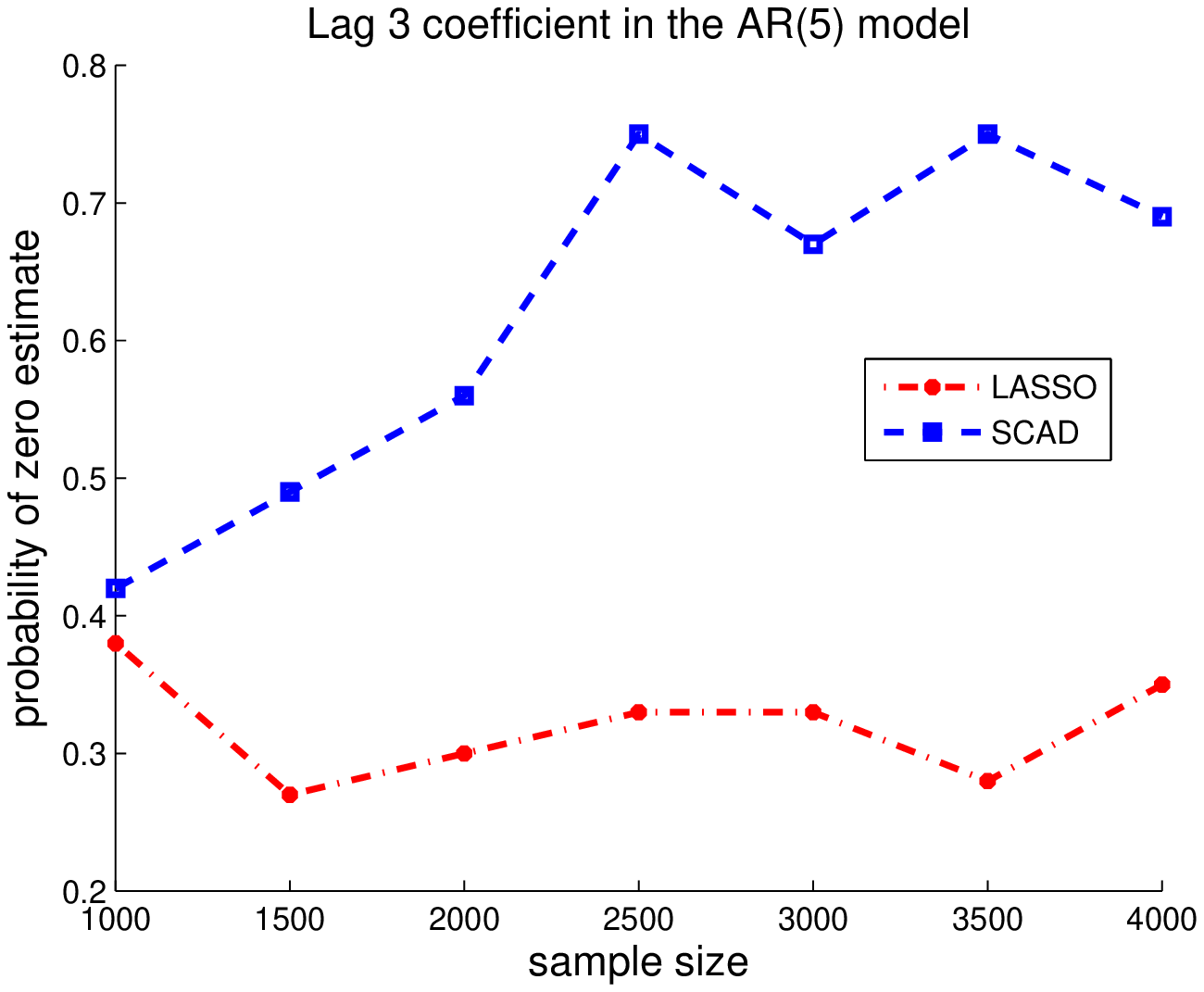}\\
\includegraphics[ height=1.500in, width=2.300in]{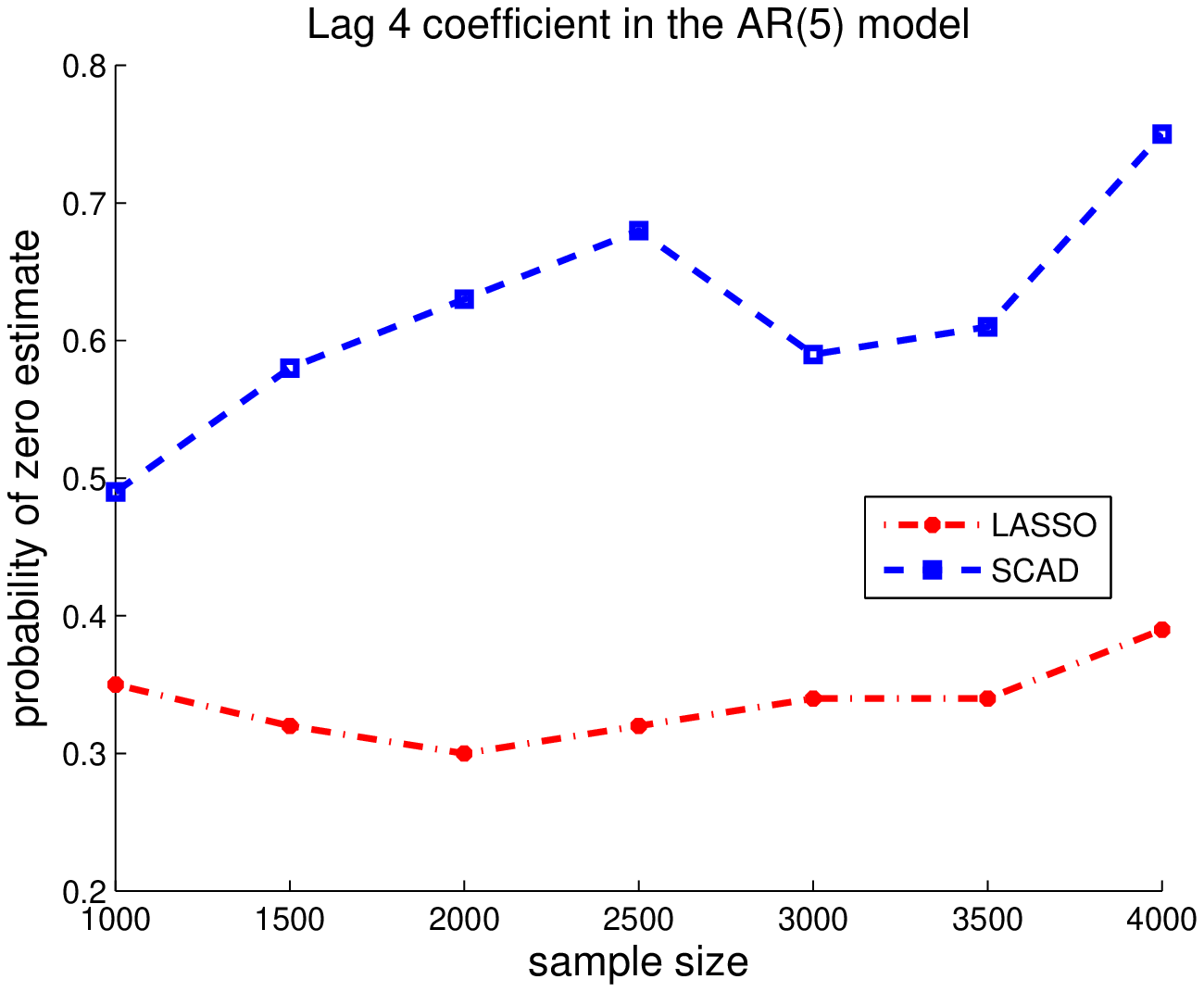}
\includegraphics[ height=1.500in, width=2.300in]{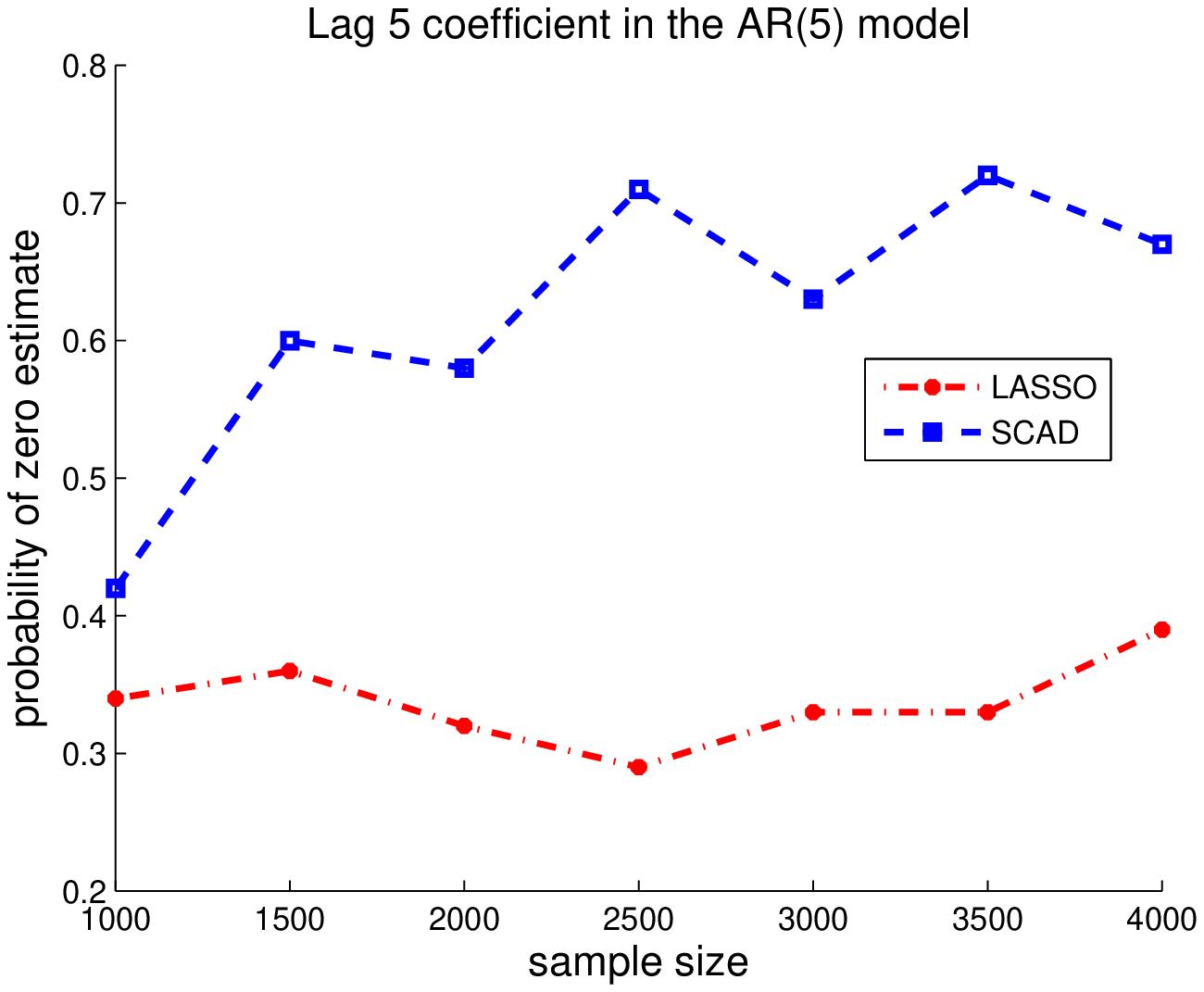}\\
\caption{The probability of zero estimates for the four coefficients 
 in model (\ref{model_3}).}
\label{fig:prob_model3}
\end{figure}
Finally, we consider models with student t innovations. It is easy to check that the density of t distribution with degree of freedom greater than 4 satisfies all the conditions in Theorems \ref{strongconsistence} and  \ref{thm2}. Therefore the PCMLE is expected to perform as well as that for the normal innovations. We simulate samples with length $N=1000$, and the degree of freedom of the T distribution $df=2, 5$. The estimation results of MLE, LASSO PMLE and SCAD PMLE are presented in Table \ref{tab:AR_T2}. The error refers to the $L_2$ norm of the difference vector between the estimated coefficients and their true values. 
For $df=2$, when the condition of Theorem \ref{thm2} is not satisfied, the errors of SCAD PMLE are even higher than those of MLE.  

\section{Application to real data}
In this section, we apply the penalized conditional likelihood method to analyze the US Industrial Production Index for consumer goods from January 1939 to August 2010 (www.economagic.com). The dataset consists of totally 860 seasonally adjusted monthly observations. We use the first 800 observations for in-sample estimation, and the last 60 for out-of-sample forecast. The first order differencing is applied to the original series to get rid of the linear trend. We fit three AR($p$) models ($p=20, 25, 30$) using both the MLE and the SCAD  PMLE. An AR($p$) model with an optimal order $p=24$ chosen by the Final Prediction Error (FPE) criterion \cite{Akaike70} is also included in the comparison. After the model is fitted, the differencing is converted and all forecast values are constructed for the original series.

We use two criteria, Mean Absolute Error (MAE) and Root Mean Square Error (RMSE), 
to evaluate the forecasts. In this example, we choose forecast steps $k=1, 6, 12$.  Let $m$ denote the total number of forecasts during the period for which the actual value $X(t)$ is known, and $F(t)$ denote the forecast value. Then, as in the literature, the MAE and RMSE are defined as: 
\begin{equation}
  MAE=\sum_{s=0}^{m-k}\frac{|F(N+s+k)-X(N+s+k)|}{m*|X(N+s)|}, \nonumber
\end{equation}  
\begin{equation}
  RMSE=\sum_{s=0}^{m-k}\left\{\frac{[F(N+s+k)-X(N+s+k)]^2}{m*X(N+s)^2}\right\}^{1/2}. \nonumber
\end{equation}
The comparative results for the forecasts of all the combinations of models and methods are summarized in Table \ref{tab:real2_forc}. The forecast errors of the SCAD PMLE are consistently smaller than those of the MLE for all the three models considered. The forecasting performances of the associated AR(24) model chosen by FPE are also shown here for the purpose of comparison.
All the three AR models fitted by SCAD PCMLE, AR($20$),   AR($25$),   AR($30$), have sparsity. Whereas the AR models fitted by regular MLE or the AR(24) model selected by FPE do not have any zero estimates at all. 
The coefficient estimates  from all the models and methods are listed in Table \ref{tab:real2_esti}. As seen from this table, lag 24 is very significant. This is why the FPE chooses 24 as the best order. However, the cost of choosing such a long-order AR model has obviously resulted in a poor prediction accuracy as can be seen from Table \ref{tab:real2_forc}. The SCAD PMLE picks up only 6 significant lags: $2, 3, 9, 18, 23, 24$, which has helped improve the prediction accuracy significantly. Also, the forecasting errors of SCAD PMLE are quite stable, except that for $p=20$ the forecasting errors are relatively higher, because one significant lag, lag 24, is excluded from the model.

\section{Conclusion}
In this paper, we propose a new sub-model selection procedure for AR($p$) models based on penalized maximum likelihood estimators of the coefficients. We prove that the resulting sparse PCMLE for the coefficient profile is both strongly consistent and locally $N^{-1/2}$ consistent under mild conditions. More importantly, under slightly additional conditions, we establish an oracle properties for the sparse estimator, analogous to the one by \cite{FanLi01} for independent observations. It says that the zero coefficients are estimated to be exactly zero with probability going to one, and the estimates for the non-zero ones are estimated as efficiently as if the true sub-model were known in prior. This property, together with the overall consistency, guarantees that the optimal sub-model is selected with probability tending to one, and the estimation efficiency for the selected coefficients gets improved by reducing from the full model to the sub-model. What is the most important, these are all done by running the model once, saving a great deal of computational cost from traditional sub-model selection methods. 

Although the asymptotic theorems look ideal, finite sample performances could be very different and even misleading. In order to give more guidance for practical use of our method, we provide with a thorough discussion on the finite sample properties. We suggest to get some preliminary information on the magnitude of the non-zero estimates and design the sample size accordingly before running the PCMLE. This way, satisfactory results can be achieved, with possibly the smallest amount of observations.

\section{Proofs}
\label{proofs}
\subsection{Proof of Theorem \ref{strongconsistence}}
We prove by contradiction. See a similar method to show strong consistency in \cite{HafnerPreminger09}. If $\hat{\boldsymbol{\theta}}_{\lambda_N}$ does not converge to $\boldsymbol{\theta}_0$ almost surely, there exists a $\eta>0$ such that the set $F=\{\omega:\limsup_{N\rightarrow \infty}\|\hat{\boldsymbol{\theta}}_{\lambda_N}(\omega)-\boldsymbol{\theta}_0\|\ge\eta\}$ has a positive probability. Since $\Lambda:=\Theta\cap\{\boldsymbol{\theta}: \|\boldsymbol{\theta}-\boldsymbol{\theta}_0\|\ge\eta\}$ is compact, for every $\omega\in F$, there exists a convergent subsequence $\{\hat{\boldsymbol{\theta}}_{\lambda_{N_i}}(\omega)\}$ such that
\begin{equation}
\{\hat{\boldsymbol{\theta}}_{\lambda_{N_i}}(\omega)\}\to \widetilde{\boldsymbol{\theta}}\in\Lambda. \nonumber
\end{equation}
It follows that
\begin{eqnarray}
  &&\limsup_{i\to\infty}\frac{1}{N_i}\left(\sum_{t=p+1}^{N_i}l_t(\boldsymbol{\theta}_0)-N_iP_{\lambda_{N_i}}
  (\boldsymbol{\theta}_0)\right) \label{1}\\
  &\leq& \limsup_{i\to\infty}\sup_{\boldsymbol{\theta}\in\Theta}\frac{1}{N_i}\left(\sum_{t=p+1}^{N_i}l_t(\boldsymbol{\theta})
  -N_iP_{\lambda_{N_i}}(\boldsymbol{\theta})\right) \nonumber\\
  &=& \limsup_{i\to\infty}\frac{1}{N_i}\left(\sum_{t=p+1}^{N_i}l_t(\hat{\boldsymbol{\theta}}_{\lambda_{N_i}}(\omega))
  -N_iP_{\lambda_{N_i}}(\hat{\boldsymbol{\theta}}_{\lambda_{N_i}}(\omega))\right) \nonumber\\
 &=& \limsup_{i\to\infty}\left(\frac{1}{N_i}\sum_{t=p+1}^{N_i}l_t(\hat{\boldsymbol{\theta}}_{\lambda_{N_i}}(\omega))
-P_{\lambda_{N_i}}(\tilde{\boldsymbol{\theta}})+P_{\lambda_{N_i}}(\tilde{\boldsymbol{\theta}})-P_{\lambda_{N_i}}(\hat{\boldsymbol{\theta}}_{\lambda_{N_i}}(\omega))\right) \nonumber\\
&=& \limsup_{i\to\infty}\frac{1}{N_i}\sum_{t=p+1}^{N_i}l_t(\hat{\boldsymbol{\theta}}_{\lambda_{N_i}}(\omega))\label{2}\\
&\leq& \limsup_{i\to\infty}\sup_{\boldsymbol{\theta}\in\Lambda}\frac{1}{N_i}\sum_{t=p+1}^{N_i}l_t(\boldsymbol{\theta})\le E\sup_{\boldsymbol{\theta}\in\Lambda}l_t(\boldsymbol{\theta})
.\label{3}
\end{eqnarray}
(\ref{2}) is from the conditions on the penalty function. We have (\ref{3}) from Lemma \ref{supl} since the first part of Assumption \ref{as1} implies $E\log^+|Z|<\infty$. On the other hand, 
\begin{eqnarray}
(\ref{1})=\lim_{N\to\infty}\frac{1}{N}\sum_{t=p+1}^{N}l_t(\boldsymbol{\theta}_0)-\lim_{N\to\infty}P_{\lambda_{N}}
  (\boldsymbol{\theta}_0)
=El_t(\boldsymbol{\theta}_0)
\end{eqnarray}
by the condition on the penalty function and Lemma \ref{lttheta}, part 1. 
Therefore, $El_t(\boldsymbol{\theta}_0)\le E\sup_{\boldsymbol{\theta}\in\Lambda}l_t(\boldsymbol{\theta})$ with a positive probability. But $\sup_{\boldsymbol{\theta}\in\Lambda}l_t(\boldsymbol{\theta})=l_t(\boldsymbol{\theta}_\Lambda)$ for some $\boldsymbol{\theta}_\Lambda\in \Lambda$ by the the continuity of $l_t(\cdot)$. This is a contradiction with Lemma \ref{lttheta}, part 2 since $||\boldsymbol{\theta}_\Lambda-\boldsymbol{\theta}||\ge\eta>0$.

\subsection{Proof of Proposition \ref{sparsity}}
\noindent We follow the pattern of the proof of Lemma 1 in \cite{FanLi01}. However, in our case, the estimation of the orders is completely different from theirs. We are considering a dependent case while theirs is for i.i.d. random variables.\\
\noindent To show $Q(\boldsymbol{0}^T, \boldsymbol{\theta}^T_{1,1})^T=\max_{||\boldsymbol{\theta}_{1,0}||\le CN^{-1/2}}Q(\boldsymbol{\theta})$, it is sufficient to have 
\begin{equation}\label{condition}
\frac{\partial Q(\boldsymbol{\theta})}{\partial \phi_j}<0 \;\;\;\;\text{for}\;\;\;0<\phi_j<CN^{-1/2}\;\;\text{and}\;\;
\frac{\partial Q(\boldsymbol{\theta})}{\partial \phi_j}>0 \;\;\;\;\text{for}\;\;\;-CN^{-1/2}<\phi_j<0
\end{equation}
for $1\le j \le s$.  By Taylor's expansion,
\begin{eqnarray}\label{Qdecomp}
&&\frac{\partial Q(\boldsymbol{\theta})}{\partial \phi_j}=\frac{\partial L(\boldsymbol{\theta})}{\partial \phi_j}-Np'_{\lambda_N}(|\phi_j|)sgn(\phi_j)\nonumber
=\frac{\partial L(\boldsymbol{\theta}_0)}{\partial \phi_j}+\sum_{i=1}^p\frac{\partial^2 L(\boldsymbol{\theta}_0)}{\partial \phi_j\partial \phi_i}(\phi_i-\phi_{i,0})\nonumber\\
&+&\frac{1}{2}\sum_{i=1}^p\sum_{k=1}^p\frac{\partial^3 L(\boldsymbol{\theta}^*)}{\partial \phi_j\partial \phi_i\partial \phi_k}(\phi_i-\phi_{i,0})(\phi_k-\phi_{k,0})-Np'_{\lambda_N}(|\phi_j|)sgn(\phi_j).
\end{eqnarray}
Here, $\boldsymbol{\theta}^*$ is between $\boldsymbol{\theta}$ and $\boldsymbol{\theta}_0$. By the observation (\ref{FtoG}), it is easy to see that
\begin{equation}
\begin{split}
\frac{\partial L(\boldsymbol{\theta})}{\partial \phi_j} &=\sum_{t=p+1}^N  \frac{\partial l_t(\boldsymbol{\theta})}{\partial \phi_j} 
=\sum_{t=p+1}^N  \frac{\partial \log g(X_t-\phi_1X_{t-1}-\cdots-\phi_p X_{t-p})}{\partial \phi_j}\nonumber\\
&=-\sum_{t=p+1}^N  \frac{ g'(X_t-\phi_1X_{t-1}-\cdots-\phi_p X_{t-p})}{ g(X_t-\phi_1X_{t-1}-\cdots-\phi_p X_{t-p})}X_{t-j},\nonumber
\end{split}
\end{equation}
\begin{eqnarray}
\frac{\partial^2 L(\boldsymbol{\theta})}{\partial \phi_j\partial \phi_i}
&=&\sum_{t=p+1}^N \left( \frac{ g''g-(g')^2}{ g^2}\right)(X_t-\phi_1X_{t-1}-\cdots-\phi_p X_{t-p})X_{t-j}X_{t-i},\nonumber
\end{eqnarray}
\begin{eqnarray}
\frac{\partial^3 L(\boldsymbol{\theta})}{\partial \phi_j\partial \phi_i\partial \phi_k}
&=&-\sum_{t=p+1}^N  \left(\frac{g'}{g}\right)''(X_t-\phi_1X_{t-1}-\cdots-\phi_p X_{t-p})X_{t-j}X_{t-i}X_{t-k}.\nonumber
\end{eqnarray}
Therefore, the first term of (\ref{Qdecomp}) is 
\begin{equation}\label{Firstderivative}
\frac{\partial L(\boldsymbol{\theta}_0)}{\partial \phi_j}=-\sum_{t=p+1}^N  \frac{ g'(Z_t)}{ g(Z_t)}X_{t-j}.
\end{equation}
In the second term of (\ref{Qdecomp}),  
\begin{eqnarray}
\frac{\partial^2 L(\boldsymbol{\theta}_0)}{\partial \phi_j\partial \phi_i}
&=&\sum_{t=p+1}^N  \frac{ g''(Z_t)g(Z_t)-(g'(Z_t))^2}{ g^2(Z_t)}X_{t-j}X_{t-i}.\nonumber
\end{eqnarray}

\noindent First, we estimate the order of (\ref{Firstderivative}), the first term of (\ref{Qdecomp}).  Since
\begin{equation}
E\frac{ g'(Z_t)}{ g(Z_t)}=\int g'(z)dz=0,
\end{equation}
\begin{equation}
E\left(\frac{ g'(Z_t)}{ g(Z_t)}\right)^2X_{t-j}^2=E\left(\frac{ g'(Z_t)}{ g(Z_t)}\right)^2EX_{t-j}^2=C(g)\gamma(0)<\infty,
\end{equation}
and for $s< t$
\begin{equation}
E\frac{ g'(Z_s)}{ g(Z_s)}X_{s-j}\frac{ g'(Z_t)}{ g(Z_t)}X_{t-j}=E\frac{ g'(Z_t)}{ g(Z_t)}EX_{t-j}X_{s-j}\frac{ g'(Z_s)}{ g(Z_s)}=0,
\end{equation}
we have
\begin{eqnarray}
E\left(\sum_{t=p+1}^N  \frac{ g'(Z_t)}{ g(Z_t)}X_{t-j}\right)^2&=& \sum_{t=p+1}^N  E\left(\frac{ g'(Z_t)}{ g(Z_t)}X_{t-j}\right)^2\nonumber\\
&=& (N-p)C(g)\gamma(0).\nonumber
\end{eqnarray}
Therefore $||\frac{\partial L(\boldsymbol{\theta}_0)}{\partial \phi_j}||=O(N^{1/2})$. By Chebyshev's inequality, we have
\begin{equation}\label{1derivative}
\frac{\partial L(\boldsymbol{\theta}_0)}{\partial \phi_j}=O_P(N^{1/2}).
\end{equation}
Next we estimate the order of the second term of (\ref{Qdecomp}).
\begin{eqnarray}
E\frac{ g''(Z_t)g(Z_t)-(g'(Z_t))^2}{ g^2(Z_t)}=\int g''(z)dz-C(g)=-C(g)\nonumber
\end{eqnarray}
by the assumptions on $g$. Denote $Y_t=\left(\frac{ g''(Z_t)g(Z_t)-(g'(Z_t))^2}{ g^2(Z_t)}+C(g)\right)X_{t-j}X_{t-i}$. Then
$EY_t=EY_tY_s=0$ for $s\ne t$, and
\begin{eqnarray}
EY_t^2&=&E\left(\frac{ g''(Z_t)g(Z_t)-(g'(Z_t))^2}{ g^2(Z_t)}+C(g)\right)^2EX_{t-j}^2X_{t-i}^2\nonumber\\
&=&\left(E\frac{ (g''(Z_t))^2g^2(Z_t)-2g''(Z_t)(g'(Z_t))^2g(Z_t)+(g'(Z_t))^4}{ g^4(Z_t)}-C^2(g)\right)EX_{t-j}^2X_{t-i}^2\nonumber\\
&=&\left(E\frac{ (g''(Z_t))^2}{ g^2(Z_t)}-2E\frac{g''(Z_t)(g'(Z_t))^2}{g^3(Z_t)}+E\frac{(g'(Z_t))^4}{ g^4(Z_t)}-C^2(g)\right)EX_{t-j}^2X_{t-i}^2\nonumber
\end{eqnarray}
is finite by the assumptions on $g$ and Lemma \ref{lemma3}. Notice that $E\frac{g''(Z_t)(g'(Z_t))^2}{g^3(Z_t)}<\infty$ by Cauchy–-Schwarz inequality. Then by the weak law of large numbers (Theorem 8.3.2, \cite{Dudley02}), $\sum_{t=p+1}^NY_t/N$ converges to $0$ in probability. Besides, $\sum_{t=p+1}^NX_{t-j}X_{t-i}/N$ converges to $\gamma(j-i)$ in probability by the ergodicity of (\ref{arp}). See the ergodicity of (\ref{arp}) in the proof of Lemma \ref{supl}. Therefore,
\begin{equation}\label{2derivative}
\frac{\partial^2 L(\boldsymbol{\theta}_0)}{\partial \phi_j\partial \phi_i}/N \;\;\;\;\text{converges to}\;\;\; -C(g)\gamma(j-i) \;\;\;\text{in probability.}
\end{equation}
Hence, the second term of (\ref{Qdecomp}) has order $O_P(N^{1/2})$. Besides,  $\frac{\partial^3 L(\boldsymbol{\theta}^*)}{\partial \phi_j\partial \phi_i\partial \phi_k}=o_P(N^{3/2})$ by Lemma \ref{3product}. Therefore, the first three terms of (\ref{Qdecomp}) have order $O_P(N^{1/2})$ by the condition on $\boldsymbol{\theta}$. By the conditions on $\lambda_N$ and $p'_{\lambda_N}(\phi_j)$, the last term $Np'_{\lambda_N}(|\phi_j|)sgn(\phi_j)$ is dominating the other three terms in (\ref{Qdecomp}).
 (\ref{condition}) is established. This completes the proof.

\subsection{Proof of Proposition \ref{weakconsistence}}

This theorem is a version of Theorem 1 in \cite{FanLi01} for dependent random variables. We sketch the proof for completeness. Let $b_N=N^{-1/2}+a_N$ and  $\boldsymbol{u}=(u_1,\cdots,u_p)^T$. To show the existence of a local maximizer with $||\hat{\boldsymbol{\theta}}-\boldsymbol{\theta}_0||=O_P(b_N)$, for any $\eta>0$, it is sufficient to have $P(\sup_{||\boldsymbol{u}||=C}Q(\boldsymbol{\theta}_0+b_N\boldsymbol{u})\le  Q(\boldsymbol{\theta}_0))\ge 1-\eta$ for some large constant $C$ .  In our case,
\begin{eqnarray}
D_N(\boldsymbol{u})&:=&Q(\boldsymbol{\theta}_0+b_N\boldsymbol{u})-Q(\boldsymbol{\theta}_0)\notag\\
&\le& L(\boldsymbol{\theta}_0+b_N\boldsymbol{u})-L(\boldsymbol{\theta}_0)-N\sum_{j=s+1}^p\{p_{\lambda_N}(|\phi_{j,0}+b_N u_j|)-p_{\lambda_N}(|\phi_{j,0}|)\}\notag\\
&\le&b_NL'(\boldsymbol{\theta}_0)^T\boldsymbol{u}+\frac{1}{2}\boldsymbol{u}^TH(\boldsymbol{\theta}_0)\boldsymbol{u}b_N^2+|\sum_{i=1}^p\sum_{j=1}^p\sum_{k=1}^p\frac{\partial^3L(\boldsymbol{\theta}^*)}{\partial\phi_i\partial\phi_j\partial\phi_k}b_N^3u_iu_ju_k|\notag\\
&-&N\sum_{j=s+1}^p b_Np_{\lambda_N}'(|\phi_{j,0}|)sgn(\phi_{j,0})u_j-\frac{N}{2}\sum_{j=s+1}^p b_N^2p_{\lambda_N}''(|\phi_{j,0}|)u_j^2\{1+o(1)\}\notag
\end{eqnarray}
Here the gradient $L'(\boldsymbol{\theta}_0)=(-\sum_{t=p+1}^T \frac{ g'(Z_t)}{ g(Z_t)}X_{t-1}, \cdots, -\sum_{t=p+1}^N  \frac{ g'(Z_t)}{ g(Z_t)}X_{t-p})^T$, and the matrix $H(\boldsymbol{\theta}_0)=(a_{ij})_{p\times p}$ with $a_{ij}=\frac{\partial^2 L(\boldsymbol{\theta}_0)}{\partial \phi_i\partial \phi_j}
=\sum_{t=p+1}^N  \frac{ g''(Z_t)g(Z_t)-(g'(Z_t))^2}{ g^2(Z_t)}X_{t-i}X_{t-j}$.
By (\ref{1derivative}), we have $\frac{\partial L(\boldsymbol{\theta}_0)}{\partial \phi_j}=O_P(N^{1/2})$. Then the first term has order $O_P(N^{1/2}b_N)=O_P(Nb_N^2)$.  Notice that $a_{ij}=O_P(N)$ by (\ref{2derivative}). Therefore, the second term has order $O_P(Nb_N^2)$. The third term has order $O_p(Nb_N^3)=o_p(Nb_N^2)$ by Lemma \ref{3product} and the condition on $a_N$ in Assumptions \ref{as3}. Recall that $b_N=N^{-1/2}+a_N$. It is obvious that the fourth term has order $O_P(Nb_N^2)$ and the fifth term has order $o_P(Nb_N^2)$. Then the second term dominates the others by choosing a sufficiently large $C$.  Let $\Sigma$ be the non-negative definite  ${p\times p}$ matrix with the entry $\gamma(j-i)$ at row $j$ and column $i$ , $1\le i, j\le p$. Again by (\ref{2derivative}), $H(\boldsymbol{\theta}_0)/N$ converges to $-C(g)\Sigma$ with $C(g)\ge 0$. Therefore the second term is  non-positive with probability tending to 1. This finishes the proof of the proposition.
\subsection{Proof of Theorem \ref{thm2}}
We only need to show the second part. 
Let $\hat{\boldsymbol{\theta}}=(\textbf{0},\hat{\boldsymbol{\theta}}_{1,1})$ be the $N^{-1/2}$ consistent local maximizer  of $Q(\boldsymbol{\theta})$ with  $\frac{\partial Q(\hat{\boldsymbol{\theta}})}{\partial\phi_j}=0 \;\;\;\text{for}\;\;\;j=s+1,\cdots,p$.
Then we have 
\begin{eqnarray}\label{talor0}
\frac{\partial Q(\hat{\boldsymbol{\theta}})}{\partial\phi_j}&=&\frac{\partial L(\hat{\boldsymbol{\theta}})}{\partial\phi_j}-Np'_{\lambda_N}(|\hat\phi_j|)sgn(\hat\phi_j)\notag\\
&=&\frac{\partial L(\boldsymbol{\theta}_0)}{\partial\phi_j}+\sum_{i=s+1}^p[\frac{\partial^2L(\boldsymbol{\theta}_0)}{\partial\phi_j\phi_i}+o_P(N)](\hat\phi_i-\phi_{i,0})\notag\\
&-&Np'_{\lambda_N}(|\phi_{j,0}|)sgn(\phi_{j,0})-N[p_{\lambda_N}''(|\phi_{j,0}|)+o_P(1)](\hat\phi_j-\phi_{j,0})\notag\\
&=&0.
\end{eqnarray}
Here the $o_p(N)$ in the second term is from Lemma \ref{3product} and the $N^{-1/2}$ consistency of $\hat{\boldsymbol{\theta}}_{1,1}$. The $o_p(1)$ in the last term is from the property of $p'''_{\lambda_N}$. Let
$$M'(\boldsymbol{\theta}_{0,1})=\left(\frac{\partial L(\boldsymbol{\theta}_0)}{\partial \phi_{s+1}},\cdots, \frac{\partial L(\boldsymbol{\theta}_0)}{\partial \phi_{p}}\right)^T$$
be the gradient vector and $M''(\boldsymbol{\theta}_{0,1})$ be the second partial derivative Hessian matrix of $L(\boldsymbol{\theta})$ at $\boldsymbol{\theta}_{0,1}$. Then the matrix form of (\ref{talor0}) is
\begin{eqnarray}
M'(\boldsymbol{\theta}_{0,1})+M''(\boldsymbol{\theta}_{0,1})(\hat{\boldsymbol{\theta}}_{1,1}-\boldsymbol{\theta}_{0,1})-N\boldsymbol{b}-N(\Delta+o_P(\textbf{1}))(\hat{\boldsymbol{\theta}}_{1,1}-\boldsymbol{\theta}_{0,1})=\textbf{0}.\notag
\end{eqnarray}
Divided by $\sqrt{N}$, together with some algebra, we have
\begin{eqnarray}
&&\frac{1}{\sqrt N}M'(\boldsymbol{\theta}_{0,1})+\sqrt{N}(\frac{1}{N}M''(\boldsymbol{\theta}_{0,1})+C(g)\Gamma)(\hat{\boldsymbol{\theta}}_{1,1}-\boldsymbol{\theta}_{0,1})\notag\\
&-&\sqrt{N}(C(g)\Gamma+\Delta+o_P(\textbf{1}))(\hat{\boldsymbol{\theta}}_{1,1}-\boldsymbol{\theta}_{0,1})-\sqrt{N}\boldsymbol{b}=\textbf{0}.\notag
\end{eqnarray}
By (\ref{2derivative}), 
$$\frac{1}{N}M''(\boldsymbol{\theta}_{0,1})+C(g)\Gamma\rightarrow \textbf{0}\;\;\text{ in probability.}$$ 
Therefore, to have the second part of Theorem \ref{thm2}, we only need to prove $\frac{1}{\sqrt N}M'(\boldsymbol{\theta}_{0,1})\Rightarrow N(0, C(g)\Gamma)$. By Cram\'{e}r-Wold device, it is enough to have 
\begin{equation}\label{linearCLT}
\frac{1}{\sqrt N}\sum_{j=s+1}^p\lambda_j\frac{\partial L(\boldsymbol{\theta}_0)}{\partial \phi_j}\Rightarrow N(0, C(g)\sum_{s+1\le i,j\le p}\lambda_i\lambda_j\gamma(i-j))
\end{equation}
for any vector $\boldsymbol{\lambda}=(\lambda_{s+1}, \cdots, \lambda_p)^T$ with $||\boldsymbol{\lambda}||\ne 0$. Let $\mathcal{F}_t$ be the $\sigma$-algebra $\sigma(X_1,\cdots, X_t)$ and  $E_t(\cdot):=E(\cdot|\mathcal{F}_t)$ be the  conditional expectation.
$$\sum_{j=s+1}^p\lambda_j\frac{\partial L(\boldsymbol{\theta}_0)}{\partial \phi_j}=-\sum_{t=p+1}^N  \frac{ g'(Z_t)}{ g(Z_t)}\sum_{j=s+1}^p\lambda_j X_{t-j}$$ is a Martingale since
\begin{equation}
E_{t-1}\left(\frac{ g'(Z_t)}{ g(Z_t)}\sum_{j=s+1}^p\lambda_j X_{t-j}\right)=\left(\sum_{j=s+1}^p\lambda_j X_{t-j}\right)E\left(\frac{ g'(Z_t)}{ g(Z_t)}\right)=0.\notag
\end{equation}
Now we can use the Lindeberg condition given in \cite{Brown71} for the Martingale central limit theorem. First, we verify the condition (1) on page 60 in his paper. 
\begin{eqnarray}
\sigma_t^2&:=&E_{t-1}\left( \left(\frac{ g'(Z_t)}{ g(Z_t)}\sum_{j=s+1}^p\lambda_j X_{t-j}\right)^2\right)\notag\\
&=&\left(\sum_{j=s+1}^p\lambda_j X_{t-j}\right)^2E\left(\frac{ g'(Z_t)}{ g(Z_t)}\right)^2=C(g)\left(\sum_{j=s+1}^p\lambda_j X_{t-j}\right)^2,\notag
\end{eqnarray}
$V_N^2:=\sum_{t=p+1}^N\sigma_t^2$ and 
\begin{equation}\label{variance}
\begin{split}
s_N^2:&=EV_N^2=C(g)\sum_{t=p+1}^NE\left(\sum_{j=s+1}^p\lambda_j X_{t-j}\right)^2\\
&=(N-p)C(g)\sum_{s+1\le i,j\le p}\lambda_i\lambda_j\gamma(i-j)=O(N).
\end{split}
\end{equation}
Then
\begin{eqnarray}
V_N^2/s_N^2&=&\frac{\sum_{t=p+1}^N\left(\sum_{j=s+1}^p\lambda_j X_{t-j}\right)^2}{\sum_{t=p+1}^NE\left(\sum_{j=s+1}^p\lambda_j X_{t-j}\right)^2}\notag\\
&=&\frac{\sum_{j=s+1}^p\lambda_j^2\sum_{t=p+1}^N X_{t-j}^2+2\sum_{j=s+1}^{p-1}\sum_{k=j+1}^p\lambda_j\lambda_k\sum_{t=p+1}^NX_{t-j}X_{t-k}}{\sum_{j=s+1}^p\lambda_j^2E\sum_{t=p+1}^N X_{t-j}^2+2\sum_{j=s+1}^{p-1}\sum_{k=j+1}^p\lambda_j\lambda_kE\sum_{t=p+1}^NX_{t-j}X_{t-k}}.\notag
\end{eqnarray}
To show $V_N^2/s_N^2\rightarrow 1$ in probability, it is sufficient to have 
\begin{equation}\label{square1}
\frac{\sum_{t=p+1}^N X_{t-j}^2}{E\sum_{t=p+1}^N X_{t-j}^2}\rightarrow 1\;\;\;\text{in probability}
\end{equation}
and 
\begin{equation}\label{alternate1}
\frac{\sum_{t=p+1}^NX_{t-j}X_{t-k}}{E\sum_{t=p+1}^NX_{t-j}X_{t-k}}\rightarrow 1\;\;\;\text{in probability}
\end{equation}
for any $s+1\le j<k\le p$. (\ref{square1}) and (\ref{alternate1}) are true from the ergodicity of (\ref{arp}).  See the ergodicity of (\ref{arp}) in the proof of Lemma \ref{supl}.  Now we verify the Lindeberg condition as in \cite{Brown71}. Let $P_t(\cdot)=P(\cdot|\mathcal{F}_t)$ be the conditional probability. For any $\epsilon>0$, 
\begin{eqnarray}
&&E_{t-1}\left(\left(\frac{ g'(Z_t)}{ g(Z_t)}\sum_{j=s+1}^p\lambda_j X_{t-j}\right)^2 \textbf{1}(|\frac{ g'(Z_t)}{ g(Z_t)}\sum_{j=s+1}^p\lambda_j X_{t-j}|\ge \epsilon s_N)\right)\notag\\
&=&\left(\sum_{j=s+1}^p\lambda_j X_{t-j}\right)^2 E_{t-1}\left(\left(\frac{ g'(Z_t)}{ g(Z_t)}\right)^2 \textbf{1}(|\frac{ g'(Z_t)}{ g(Z_t)}|\ge \epsilon s_N/\sum_{j=s+1}^p\lambda_j X_{t-j})\right)\notag\\
&\le&\left(\sum_{j=s+1}^p\lambda_j X_{t-j}\right)^2 \left(E\left(\frac{ g'(Z_t)}{ g(Z_t)}\right)^4\right)^{1/2} \left[P_{t-1}\left(|\frac{ g'(Z_t)}{ g(Z_t)}|\ge \epsilon s_N/\sum_{j=s+1}^p\lambda_j X_{t-j}|\right)\right]^{1/2}\notag\\
&\le&\left(\sum_{j=s+1}^p\lambda_j X_{t-j}\right)^2 \left(E\left(\frac{ g'(Z_t)}{ g(Z_t)}\right)^4\right)^{1/2} \left[\frac{(\sum_{j=s+1}^p\lambda_j X_{t-j})^2E\left(\frac{ g'(Z_t)}{ g(Z_t)}\right)^2}{ \epsilon^2 s_N^2}\right]^{1/2}\notag\\
&=&\left(\sum_{j=s+1}^p\lambda_j X_{t-j}\right)^3 \left(E\left(\frac{ g'(Z_t)}{ g(Z_t)}\right)^4\right)^{1/2} \left(E\left(\frac{ g'(Z_t)}{ g(Z_t)}\right)^2 \right)^{1/2}\epsilon^{-1} s_N^{-1}.\notag
\end{eqnarray}
By a similar argument as in Lemma \ref{3product}, $\sum_{j=p+1}^N\left(\sum_{j=s+1}^p\lambda_j X_{t-j}\right)^3=o_P(N^{3/2})$. Therefore,
\begin{eqnarray}
&&s_N^{-2}\sum_{j=p+1}^NE_{t-1}\left(\left(\frac{ g'(Z_t)}{ g(Z_t)}\sum_{j=s+1}^p\lambda_j X_{t-j}\right)^2 \textbf{1}(|\frac{ g'(Z_t)}{ g(Z_t)}\sum_{j=s+1}^p\lambda_j X_{t-j}|\ge \epsilon s_N)\right)\notag\\
&=&o_P(1).\notag
\end{eqnarray}
The Lindeberg condition is satisfied. Finally, the variance in the central limit theorem (\ref{linearCLT}) is from the calculation (\ref{variance}). This finishes the proof.


\section{Appendix}

\begin{lemma}\label{supl}
Assume the AR model (\ref{arp}) is causal, $E\log^+|Z|:=E\{\max(0, \log|Z|)\}<\infty$ and  $g(z)$ is continuous.  Let $\Lambda$ be a compact subset of the parameter space $\Theta$. Then $\left\{\sup_{\boldsymbol{\theta}\in\Lambda}l_t(\boldsymbol{\theta})\right\}$ is strictly stationary, ergodic and 
$$\limsup_{N\rightarrow \infty}\sup_{\boldsymbol{\theta}\in\Lambda}\frac{1}{N}\sum_{i=p+1}^N l_i(\boldsymbol{\theta})\le E\sup_{\boldsymbol{\theta}\in\Lambda}l_t(\boldsymbol{\theta})\;\; a.s.$$
\end{lemma}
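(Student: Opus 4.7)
The plan is to split the conclusion into three parts---strict stationarity of $\{\sup_{\boldsymbol{\theta}\in\Lambda} l_t(\boldsymbol{\theta})\}$, its ergodicity, and the $\limsup$ inequality---and reduce all three to Birkhoff's ergodic theorem applied to the driving process $\{X_t\}$. First I would argue that $\{X_t\}$ itself is strictly stationary and ergodic. Causality supplies the one-sided moving-average representation $X_t = \sum_{i=0}^\infty a_i Z_{t-i}$ with $\sum_{i\ge 0}|a_i|<\infty$, and under $E\log^+|Z|<\infty$ a classical Borel--Cantelli argument (applied to the tails $\{|a_i Z_{t-i}|>\varepsilon\}$ for geometrically decaying $a_i$) yields absolute almost sure convergence of this series. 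Because each $X_t$ is then a measurable, shift-commuting functional of the two-sided i.i.d.\ sequence $\{Z_t\}$, the process $\{X_t\}$ inherits strict stationarity and ergodicity from $\{Z_t\}$.

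Next I would transfer these properties to $\{\sup_{\boldsymbol{\theta}\in\Lambda} l_t(\boldsymbol{\theta})\}$. From (\ref{FtoG}), $l_t(\boldsymbol{\theta})=\log g(X_t-\phi_1 X_{t-1}-\cdots-\phi_p X_{t-p})$ depends on $t$ only through the block $(X_{t-p},\ldots,X_t)$. Since $g$ is continuous (and the convention $\log 0=0$ preserves measurability), $l_t(\boldsymbol{\theta})$ is jointly measurable in $(\boldsymbol{\theta},X_{t-p},\ldots,X_t)$, and the supremum over the compact set $\Lambda$ reduces via separability to a countable supremum, producing a Borel function $h:\mathbb{R}^{p+1}\to[-\infty,+\infty)$ with $\sup_{\boldsymbol{\theta}\in\Lambda} l_t(\boldsymbol{\theta})=h(X_{t-p},\ldots,X_t)$. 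Strict stationarity and ergodicity of $\{X_t\}$ therefore carry over to $\{\sup_{\boldsymbol{\theta}\in\Lambda} l_t(\boldsymbol{\theta})\}_t$ as a measurable block-functional of a stationary ergodic sequence.

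For the inequality, I would combine the elementary bound $\sup_{\boldsymbol{\theta}\in\Lambda} \tfrac{1}{N}\sum_{i=p+1}^N l_i(\boldsymbol{\theta}) \le \tfrac{1}{N}\sum_{i=p+1}^N \sup_{\boldsymbol{\theta}\in\Lambda} l_i(\boldsymbol{\theta})$ with Birkhoff's ergodic theorem applied to the right-hand side. If $E[\sup_{\boldsymbol{\theta}\in\Lambda} l_t(\boldsymbol{\theta})]^+<\infty$, the time average converges almost surely to $E\sup_{\boldsymbol{\theta}\in\Lambda} l_t(\boldsymbol{\theta})\in[-\infty,+\infty)$; otherwise the right-hand side of the claim is $+\infty$ and the inequality is vacuous. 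In either case, passing to the $\limsup$ on the left yields the desired bound.

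The main obstacle I anticipate is justifying measurability of $\sup_{\boldsymbol{\theta}\in\Lambda} l_t(\boldsymbol{\theta})$ when $g$ may vanish: the convention $\log 0=0$ can destroy continuity of $l_t(\boldsymbol{\theta})$ in $\boldsymbol{\theta}$, so one cannot simply invoke attainment of the sup at a Borel-measurable argmax. Compactness and separability of $\Lambda$ circumvent this by reducing the sup to one over a countable dense subset, but this needs to be spelled out. A secondary, more conceptual point is the deployment of Birkhoff's theorem in its one-sided form, requiring only integrability of the positive part of $\sup_{\boldsymbol{\theta}\in\Lambda} l_t(\boldsymbol{\theta})$; this keeps the statement meaningful without demanding an $L^1$ bound.
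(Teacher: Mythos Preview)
Your proposal is correct and reaches the conclusion, but the route differs from the paper's in two respects. For ergodicity of $\{X_t\}$, the paper argues through strong mixing: absolute continuity of the innovation law makes the causal AR process strong mixing (Mokkadem), hence ergodic (Billingsley). Your argument---$X_t$ as a measurable, shift-commuting functional of the i.i.d.\ sequence $\{Z_t\}$---is more direct and bypasses mixing altogether. For measurability of $\sup_{\boldsymbol{\theta}\in\Lambda} l_t(\boldsymbol{\theta})$, the paper first shows by a short contradiction argument that this supremum is \emph{continuous} as a function of the block $(X_t,\ldots,X_{t-p})$, and then invokes Pfanzagl's measurable-selection theorem to exhibit a Borel argmax $\boldsymbol{\varphi}(\mathbf{X})$; your countable-dense-subset reduction is lighter machinery, but contrary to what you suggest it does \emph{not} sidestep the discontinuity introduced by the convention $\log 0=0$: equality of $\sup_\Lambda$ and $\sup_D$ for a countable dense $D\subset\Lambda$ still requires continuity (or at least a suitable semicontinuity) of $\boldsymbol{\theta}\mapsto l_t(\boldsymbol{\theta})$. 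The paper's proof shares this gap---it asserts continuity of $l_t$ directly from continuity of $g$ without treating the zero set of $g$---so in effect both arguments presume $g>0$. Your explicit appeal to the one-sided form of Birkhoff's theorem (integrability of the positive part only) is a refinement the paper leaves implicit.
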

\begin{proof}
Let $\textbf{X}:=(X_t,X_{t-1},\cdots,X_{t-p})^T$ be the vector of $p+1$ random variables. Recall (\ref{FtoG}). To emphasize the dependence of $l_t(\boldsymbol{\theta})$ on $\textbf{X}$, denote $l_t(\textbf{X},\boldsymbol{\theta}):=l_t(\boldsymbol{\theta})=\log g(X_t-\sum_{j=1}^{p}\phi_jX_{t-j})$. 
Then $l_t(\textbf{X},\boldsymbol{\theta})$ is continuous by the continuity of $g(z)$. We claim that $\sup_{\boldsymbol{\theta}\in\Pi}l_t(\textbf{X},\boldsymbol{\theta})$  is continuous with respect to $\textbf{X}$, for any compact subset $\Pi\subset\Lambda$.  Assume by contradiction that
 $\sup_{\boldsymbol{\theta}\in\Pi}l_t(\textbf{X},\boldsymbol{\theta})$  is  not continuous at $\textbf{X}_{(0)}$. Then there exists an $\epsilon >0$ such that for all $\delta >0$, there exists a $\textbf{X}_{(1)}$,
 $$\left\|\textbf{X}_{(1)}-\textbf{X}_{(0)}\right\|<\delta \;\;\text{ and} \;\;|\sup_{\boldsymbol{\theta}\in\Pi}l_t(\textbf{X}_{(1)},\boldsymbol{\theta})-\sup_{\boldsymbol{\theta}\in\Pi}l_t(\textbf{X}_{(0)},\boldsymbol{\theta})|>\epsilon.$$   
By the continuity of $l_t(\textbf{X},\boldsymbol{\theta})$ with respect to $\boldsymbol{\theta}=(\phi_1,\cdots,\phi_p)^T$, 
$\sup_{\boldsymbol{\theta}\in\Pi}l_t(\textbf{X},\boldsymbol{\theta})$ 
is attained in $\Pi$ for each $\textbf{X}$ and each compact subset $\Pi$ of $\Lambda$. Denote $$l_t(\textbf{X}_{(0)},\boldsymbol{\theta}_{(0)})=\sup_{\boldsymbol{\theta}\in\Pi}l_t(\textbf{X}_{(0)},\boldsymbol{\theta})\;\;\text{ and}  \;\;l_t(\textbf{X}_{(1)},\boldsymbol{\theta}_{(1)})=\sup_{\boldsymbol{\theta}\in\Pi}l_t(\textbf{X}_{(1)},\boldsymbol{\theta}).$$
Without loss of generality, assume $l_t(\textbf{X}_{(1)},\boldsymbol{\theta}_{(1)})>l_t(\textbf{X}_{(0)},\boldsymbol{\theta}_{(0)})$. Then 
$$l_t(\textbf{X}_{(1)},\boldsymbol{\theta}_{(1)})-l_t(\textbf{X}_{(0)},\boldsymbol{\theta}_{(1)})>l_t(\textbf{X}_{(1)},\boldsymbol{\theta}_{(1)})-l_t(\textbf{X}_{(0)},\boldsymbol{\theta}_{(0)})>\epsilon.$$ 
This is a contradiction with the continuity of $l_t(\textbf{X},\boldsymbol{\theta})$ with respect to $\textbf{X}$. 
Hence  $\sup_{\boldsymbol{\theta}\in\Pi}l_t(\textbf{X},\boldsymbol{\theta})$ is continuous with respect to $\textbf{X}$, for any compact subset $\Pi\subset\Lambda$.  
Consequently  $\sup_{\boldsymbol{\theta}\in\Pi}l_t(\textbf{X},\boldsymbol{\theta})$ is $\mathcal{B}$ measurable, where $\mathcal{B}$ is the Borel $\sigma$-algebra on $\mathbb{R}^{p+1}$. This verifies the second condition of Theorem 3.10 in \cite{Pfanzagl69}. The other two conditions are obvious by the continuity of $l_t(\textbf{X},\boldsymbol{\theta})$. Besides, $\Lambda$ is a compact set in $\Theta$. Therefore, by Theorem 3.10 of \cite{Pfanzagl69}, there exists a $\mathcal{B}$ measurable function $\boldsymbol{\varphi}(\textbf{X})=(\varphi_1(\textbf{X}),\cdots,\varphi_p(\textbf{X}))^T: \mathbb{R}^{p+1}\to \Lambda$ such that
\begin{eqnarray}\label{supexp}
\sup_{\boldsymbol{\theta}\in\Lambda}l_t(\textbf{X},\boldsymbol{\theta})=\log g(X_t-\sum_{j=1}^p\varphi_j(\textbf{X})X_{t-j})
\end{eqnarray}
 Since the AR(p) model (\ref{arp}) is causal and $E\log^+|Z|<\infty$,  (\ref{arp}) is strictly stationary by Theorem 1 in \cite{BrockwellLindner10}. On the other hand, (\ref{arp}) is ergodic: $Z_t$ has a continuous density function $g(z)$ implies that its law is absolutely continuous with respect to the Lebesgue measure on $\mathbb{R}$. Therefore, (\ref{arp}) is strong mixing (\cite{Mokkadem88}) and then  is ergodic (problem 24.3, \cite{Billingsley95}). By (\ref{supexp}) and the continuity of $g(z)$, the time series $\left\{\sup_{\boldsymbol{\theta}\in\Lambda}l_t(\boldsymbol{\theta})\right\}$ is strictly stationary and ergodic (Theorem 36.4, \cite{Billingsley95}). Therefore,
\begin{eqnarray}
\limsup_{N\rightarrow \infty}\sup_{\boldsymbol{\theta}\in\Lambda}\frac{1}{N}\sum_{i=p+1}^N l_i(\boldsymbol{\theta})
\le\limsup_{N\rightarrow \infty}\frac{1}{N}\sum_{i=p+1}^N \sup_{\boldsymbol{\theta}\in\Lambda}l_i(\boldsymbol{\theta})
= E\sup_{\boldsymbol{\theta}\in\Lambda}l_t(\boldsymbol{\theta})\;\; a.s.\notag
\end{eqnarray}

\end{proof}

\begin{lemma}\label{lttheta}
 Assume  $Z, Z_t$ are i.i.d. and  $E|\log g(Z)|<\infty$, we have
\begin{enumerate}
 \item $\lim_{N\rightarrow\infty}\frac{1}{N}\sum_{t=p+1}^Nl_t(\boldsymbol{\theta}_0)=E\log g(Z)=El_t(\boldsymbol{\theta}_0);$
 \item $El_t(\boldsymbol{\theta}_0)\ge El_t(\boldsymbol{\theta})\; \text{with equality if and only if}\;\; \boldsymbol{\theta}=\boldsymbol{\theta}_0.$
\end{enumerate}
\end{lemma}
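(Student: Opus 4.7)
My plan has two independent pieces. For part 1, I would use the identity $l_t(\boldsymbol{\theta}_0) = \log g(Z_t)$ from (\ref{Ltheta0}) to rewrite
$$\frac{1}{N}\sum_{t=p+1}^N l_t(\boldsymbol{\theta}_0) \;=\; \frac{1}{N}\sum_{t=p+1}^N \log g(Z_t)$$
as an average of i.i.d.\ random variables with finite first absolute moment (by hypothesis). Kolmogorov's Strong Law of Large Numbers then delivers almost sure convergence to $E\log g(Z)$, and identical distribution gives $E\log g(Z) = E l_t(\boldsymbol{\theta}_0)$. This step is routine and uses only the i.i.d.\ assumption plus the integrability hypothesis.

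For part 2, I would run the classical Kullback--Leibler argument in the present conditional-likelihood setting. Using (\ref{FtoG}) and (\ref{Ltheta0}), I first write
$$E l_t(\boldsymbol{\theta}_0) - E l_t(\boldsymbol{\theta}) \;=\; E\log\frac{g(Z_t)}{g\!\left(X_t - \sum_{j=1}^p \phi_j X_{t-j}\right)}.$$
Then I would condition on $\mathcal{G} := \sigma(X_{t-1},\ldots,X_{t-p})$ and exploit the fact that, by causality (Assumption \ref{as1}), $Z_t$ is independent of $\mathcal{G}$ with density $g$. A change of variables $u = x - \sum_j \phi_j X_{t-j}$ inside the conditional expectation shows that the expectation of the reciprocal ratio equals
$$E\!\left[\frac{g(X_t - \sum_j \phi_j X_{t-j})}{g(Z_t)} \,\Big|\, \mathcal{G}\right] \;=\; \int g(u)\, du \;=\; 1.$$
Taking the unconditional expectation and applying (strict) Jensen's inequality to the concave function $\log$ then yields
$$E l_t(\boldsymbol{\theta}_0) - E l_t(\boldsymbol{\theta}) \;\ge\; -\log E\frac{g(X_t - \sum_j \phi_j X_{t-j})}{g(Z_t)} \;=\; 0,$$
delivering the inequality half of part 2.

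The main obstacle is the equality case, i.e., showing that equality forces $\boldsymbol{\theta} = \boldsymbol{\theta}_0$. Equality in Jensen's inequality forces the ratio $g(X_t - \sum_j \phi_j X_{t-j})/g(Z_t)$ to be a.s.\ constant; coupled with total expectation equal to $1$, this constant must be $1$, so $g(Z_t - \Delta_t) = g(Z_t)$ a.s., where $\Delta_t := \sum_{j=1}^p (\phi_j - \phi_{j,0}) X_{t-j}$ is $\mathcal{G}$-measurable. To convert this functional equation into $\boldsymbol{\theta} = \boldsymbol{\theta}_0$, I would combine two ingredients: (i) a non-degeneracy property of $g$ (no translation-invariance on any set of positive probability, which is automatic if, e.g., $g$ is positive and has a finite second moment) to conclude $\Delta_t = 0$ almost surely; and (ii) the positive-definiteness of the stationary covariance matrix of $(X_{t-1},\ldots,X_{t-p})$, which follows from causality and rules out nontrivial linear dependence among the lags, thereby forcing $\phi_j - \phi_{j,0} = 0$ for every $j$. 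Step (i) is the subtle one since the stated hypotheses place no explicit non-periodicity assumption on $g$; I anticipate it being handled by an implicit identifiability condition or a characteristic-function argument.
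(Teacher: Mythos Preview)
Your proposal is correct and follows essentially the same route as the paper: part 1 is the SLLN applied to $\log g(Z_t)$ via (\ref{Ltheta0}), and part 2 is the conditional Jensen/Kullback--Leibler argument, which the paper packages by first proving $E\log g(Z+C)\le E\log g(Z)$ for constant $C$ and then conditioning on the scalar shift $X=\sum_j(\phi_{j,0}-\phi_j)X_{t-j}$ rather than on all of $\mathcal{G}$. Your treatment of the equality case is in fact more explicit than the paper's, which simply invokes strict concavity of $\log$ for $C\ne 0$ and leaves both the non-periodicity of $g$ and the passage from $X=0$ a.s.\ to $\boldsymbol{\theta}=\boldsymbol{\theta}_0$ implicit.
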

\begin{proof}
1. Recall $l_t(\boldsymbol{\theta}_0)=\log g(Z_t)$ from (\ref{Ltheta0}). Then $|El_t(\boldsymbol{\theta}_0)|=|E\log g(Z)|<\infty$. 
By the law of large number\textbf{s},
\begin{eqnarray}
&&\lim_{N\rightarrow\infty}\frac{1}{N}\sum_{t=p+1}^Nl_t(\boldsymbol{\theta}_0)\nonumber\\
&=&\lim_{N\rightarrow\infty}\frac{1}{N}\sum_{t=p+1}^N\log g(Z_t)=E\log g(Z)=El_t(\boldsymbol{\theta}_0) \;\;a.s.\nonumber
\end{eqnarray}
2. First, we show $E\log g(Z+C)\le E\log g(Z)$ for any constant $C$ and the equality holds if and only if $C=0$. Recall $g$ is the density function of $Z$. Obviously, the equality holds if $C=0$. If $C\ne 0$, by the strict concavity of  the logarithm function,
\begin{equation}
\label{contrast}
\begin{split}
E\log g(Z+C)-E\log g(Z) &=E\log\frac{g(Z+C)}{g(Z)}<\log E\frac{g(Z+C)}{g(Z)}\\
                        &=\log\int g(z+C)dz=0.
\end{split}
\end{equation}
(\ref{contrast}) is a simplified version of Example (1.3) in \cite{Pfanzagl69}. We provide the proof here for completeness. Now let $X=(\phi_{1,0}-\phi_1)X_{t-1}+\cdots+(\phi_{p,0}-\phi_p)X_{t-p}$. Since $Z_t$ is independent of $X$,  by (\ref{contrast}), we have
\begin{eqnarray}
El_t(\boldsymbol{\theta})&=&E\log g(X_t-\phi_1X_{t-1}-\cdots-\phi_p X_{t-p})\nonumber\\
&=&E\log g(X_t-\phi_{1,0}X_{t-1}-\cdots-\phi_{p,0}X_{t-p}+(\phi_{1,0}-\phi_1)X_{t-1}+\cdots+(\phi_{p,0}-\phi_p)X_{t-p})\nonumber\\
&=&E\log g(Z_t+X)= E\{E(\log g(Z_t+X)|X)\}\nonumber\\
&\le& E(E \log g(Z_t))=E\log g(Z_t)=El_t(\boldsymbol{\theta}_0).\nonumber
\end{eqnarray}
This completes the proof.
\end{proof}
\begin{remark}
In the case $Z, Z_t\sim N(0,1)$,
\begin{eqnarray}
El_t(\boldsymbol{\theta})&=&E\log g(Z_t+X)=-\frac{1}{2}\log 2\pi-\frac{1}{2}E (Z_t+X)^2\nonumber\\
&=&-\frac{1}{2}\log 2\pi-\frac{1}{2}EZ_t^2-\frac{1}{2}EX^2\nonumber\\
&=&E\log g(Z)-\frac{1}{2}E X^2=El_t(\boldsymbol{\theta}_0)-\frac{1}{2}E X^2.\nonumber
\end{eqnarray}
Obviously, the second part of Lemma \ref{lttheta} is true in this case.
\end{remark}

\begin{lemma}\label{3product}
Assume Assumptions \ref{as1} and part 3 of Assumptions \ref{as2}. Further, assume $Z$ has the first three moments. Then $\sum_{t=1}^N \left(\frac{g'}{g}\right)''(X_t-\phi_1X_{t-1}-\cdots-\phi_p X_{t-p})X_tX_{t-i}X_{t-k}=O_P(N)$, for any given integers $i,\; k$.
\end{lemma}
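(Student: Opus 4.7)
My plan is to reduce the sum to a sum of absolute values of products $X_t X_{t-i} X_{t-k}$ using the uniform bound on $(g'/g)''$, and then control this by Markov's inequality after verifying the necessary third-moment bound for $X_t$.

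The first step is to observe that by the causality part of Assumptions \ref{as1}, we have the moving average representation $X_t=\sum_{j=0}^\infty a_j Z_{t-j}$ with $\sum_{j=0}^\infty|a_j|<\infty$. Since $Z$ has a finite third moment by assumption, Minkowski's inequality in $L^3$ gives
$$\|X_t\|_3\le\|Z\|_3\sum_{j=0}^\infty|a_j|<\infty,$$
so the third moment of $X_t$ is finite and (by strict stationarity of the AR model) does not depend on $t$. A single application of Hölder's inequality then yields
$$E|X_tX_{t-i}X_{t-k}|\le\|X_t\|_3\|X_{t-i}\|_3\|X_{t-k}\|_3=\|X_0\|_3^3<\infty$$
uniformly in $t,i,k$.

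The second step uses part 3 of Assumptions \ref{as2}, which I will interpret in its standard two-sided form $|(g'/g)''(z)|\le B$ (the one-sided bound stated, combined with the role this assumption plays in Proposition \ref{sparsity}, makes clear that it is absolute boundedness that is needed). Taking absolute values inside the sum,
$$E\left|\sum_{t=1}^N\left(\frac{g'}{g}\right)''(X_t-\phi_1X_{t-1}-\cdots-\phi_pX_{t-p})X_tX_{t-i}X_{t-k}\right|\le B\sum_{t=1}^N E|X_tX_{t-i}X_{t-k}|=BN\|X_0\|_3^3,$$
which is $O(N)$. Markov's inequality then delivers $O_P(N)$, finishing the proof.

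There is essentially no obstacle beyond bookkeeping: the only substantive point is that the existence of $E|Z|^3$ propagates to $E|X_t|^3$ through the absolutely summable causal representation, and everything else is a direct bound. One minor caveat is the proper reading of the uniform bound in Assumptions \ref{as2}; if only the stated one-sided bound is used, then the argument bounds the sum from above but requires a parallel lower bound, which for all practical distributions of interest (e.g.\ Gaussian, where $(g'/g)''\equiv 0$; Student $t$ with $df>4$) is automatic.
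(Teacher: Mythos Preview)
Your proof is correct and follows essentially the same route as the paper: bound the sum in absolute value using $|(g'/g)''|\le B$, show $E|X_tX_{t-i}X_{t-k}|<\infty$ via the causal representation, and finish with Markov's inequality. The only cosmetic difference is that you obtain the triple-product bound through Minkowski plus H\"older, whereas the paper expands $E|X_tX_{t-i}X_{t-k}|$ as a triple sum over the MA coefficients and bounds it by $A(\sum_j|a_j|)^3$ with $A=\max(E|Z|^3,\,EZ^2\,E|Z|,\,(E|Z|)^3)$; your observation about the one-sided versus two-sided reading of the bound on $(g'/g)''$ is apt, as the paper's own proof also uses the two-sided version.
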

\begin{proof}
Let $A=\max(E|Z_1|^3, EZ_1^2E|Z_2|, (E|Z_1|)^3)$. 
Under the causality condition, $X_t=\sum_{j=0}^\infty a_jZ_{t-j}$ for a sequence of constants  $a_j$ with $\sum_{j=0}^\infty |a_j|<\infty$. 
\begin{equation}\label{e3}
\begin{split}
  &E|X_tX_{t-i}X_{t-k}|=E|\sum_{j=0}^{\infty}\sum_{p=0}^{\infty}\sum_{q=0}^{\infty}a_ja_pa_qZ_{t-j}Z_{t-i-p}Z_{t-k-q}|\\
  &\le\sum_{j=0}^{\infty}\sum_{p=0}^{\infty}\sum_{q=0}^{\infty}E|a_ja_pa_qZ_{t-j}Z_{t-i-p}Z_{t-k-q}|\\
  &\le A\sum_{j=0}^{\infty}\sum_{p=0}^{\infty}\sum_{q=0}^{\infty}|a_ja_pa_q|=A(\sum_{j=0}^{\infty}|a_j|)^3<\infty.
\end{split}
\end{equation}
Therefore,
\begin{equation*}
\begin{split}
&E|\sum_{t=1}^N \left(\frac{g'}{g}\right)''X_tX_{t-i}X_{t-k}|\le \sum_{t=1}^N E|\left(\frac{g'}{g}\right)''X_tX_{t-i}X_{t-k}| \\
&\le B \sum_{t=1}^N E|X_tX_{t-i}X_{t-k}|=O(N).
\end{split}
\end{equation*}
Then by Markov's inequality, the desired result follows.
\end{proof}

\begin{lemma}\label{lemma3}
Assume that the innovations $\{Z_t\}$ are i.i.d. random variables.  Under the causality condition, $X_t$ in the AR(p) model (\ref{arp}), has the $m^{th}$ moment if the corresponding innovation $Z_t$ has the $m^{th}$ moment. When the innovation has the fourth moment, $EX_t^2X_{t+k}^2<\infty$ for any given integer $k$.
\end{lemma}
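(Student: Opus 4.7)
\textbf{Proof proposal for Lemma \ref{lemma3}.} By causality, there exist constants $\{a_j\}_{j \geq 0}$ with $\sum_{j=0}^\infty |a_j| < \infty$ such that $X_t = \sum_{j=0}^\infty a_j Z_{t-j}$ almost surely. The plan is to first establish the $L^m$-convergence of this infinite series via Minkowski's inequality, and then derive the cross-moment bound by Cauchy--Schwarz.

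For the first claim, assume $E|Z|^m < \infty$ for some $m \geq 1$. Consider the partial sums $S_n = \sum_{j=0}^n a_j Z_{t-j}$. Since the $\{Z_{t-j}\}$ are i.i.d.\ with finite $m$-th moment, Minkowski's inequality gives
\begin{equation*}
\|S_n - S_{n'}\|_m \;=\; \Bigl\|\sum_{j=n'+1}^n a_j Z_{t-j}\Bigr\|_m \;\leq\; \sum_{j=n'+1}^n |a_j|\,\|Z\|_m,
\end{equation*}
which tends to $0$ as $n, n' \to \infty$ because $\sum_j |a_j| < \infty$. Hence $\{S_n\}$ is Cauchy in $L^m$ and converges to some limit $\tilde X_t$ in $L^m$. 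Since $S_n \to X_t$ a.s.\ by causality, uniqueness of limits forces $\tilde X_t = X_t$ a.s., and another application of Minkowski yields
\begin{equation*}
\|X_t\|_m \;\leq\; \sum_{j=0}^\infty |a_j|\,\|Z\|_m \;<\; \infty.
\end{equation*}
For $0 < m < 1$, the same conclusion follows even more easily via $|x+y|^m \leq |x|^m + |y|^m$, but for this lemma the relevant cases are $m \geq 1$ (in particular $m = 2, 3, 4$ for the applications in the paper).

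For the second claim, since $Z$ has a fourth moment, applying the first part with $m = 4$ gives $E X_t^4 < \infty$, and by stationarity $E X_{t+k}^4 = E X_t^4$ for any integer $k$. Cauchy--Schwarz then yields
\begin{equation*}
E X_t^2 X_{t+k}^2 \;\leq\; \bigl(E X_t^4\bigr)^{1/2} \bigl(E X_{t+k}^4\bigr)^{1/2} \;=\; E X_t^4 \;<\; \infty.
\end{equation*}

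The main (minor) obstacle is justifying the interchange of the sum and expectation in the Minkowski step, i.e., showing that the a.s.\ representation of $X_t$ as the infinite series also holds in $L^m$. This is handled by the Cauchy-in-$L^m$ argument above together with a.s.\ uniqueness of limits; no further delicate estimates are needed.
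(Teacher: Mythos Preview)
Your proof is correct and in fact cleaner than the paper's, though the two take genuinely different routes.

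For the first claim, the paper normalizes so that $\sum_{j\ge 0}|a_j|=1$ and then applies the convexity of $x\mapsto|x|^m$ in an iterated (telescoping) Jensen-type argument to obtain $|X_t|^m\le\sum_{j\ge 0}|a_j|\,|Z_{t-j}|^m$, after which taking expectations finishes the job. Your Minkowski argument is the more standard $L^m$ approach and has the advantage that it works uniformly for all $m\ge 1$ without any normalization trick; the paper's convexity route, on the other hand, yields the slightly sharper pointwise inequality above, which can be useful elsewhere but is not needed here.

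For the second claim, the paper rewrites $X_t+X_{t-k}$ as a new causal linear process in the same innovations and argues that this sum has a finite fourth moment, then bounds $EX_t^2X_{t-k}^2$ by $E\bigl((X_t+X_{t-k})/2\bigr)^4$. Your Cauchy--Schwarz step $EX_t^2X_{t+k}^2\le (EX_t^4)^{1/2}(EX_{t+k}^4)^{1/2}$ is both shorter and more robust: it avoids constructing the auxiliary process entirely and reduces the problem directly to the first part with $m=4$. The paper itself notes that an alternative proof along the lines of Lemma~\ref{3product} is possible; your argument is essentially the most economical such alternative.
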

\begin{proof}
Under the causality condition, $X_t=\sum_{j=0}^\infty a_jZ_{t-j}$ for a sequence of constants $a_j$ with $\sum_{j=0}^\infty |a_j|<\infty$. Without loss of generality, we assume that 
$\sum_{j=0}^\infty |a_j|=1$, and $a_j\ne 0$ for $j=1,2,\cdots$. By the convexity of the function $|x|^m$,
\begin{equation}
\begin{split}
  &\left|\sum_{j=0}^{\infty}a_jZ_{t-j}\right|^m\le\left(\sum_{j=0}^{\infty}|a_jZ_{t-j}|\right)^m =\left(|a_0||Z_t|+\sum_{j=1}^{\infty}|a_j||Z_{t-j}|\right)^m\\
  &\leq |a_0|\left|Z_t\right|^m+\left(\sum_{j=1}^{\infty}|a_j|\right)\left(\frac{|a_1||Z_{t-1}|+\sum_{j=2}^{\infty}|a_j||Z_{t-j}|}
  {\sum_{j=1}^{\infty}|a_j|}\right)^m\\
  &\leq |a_0|\left|Z_t\right|^m+\left(\sum_{j=1}^{\infty}|a_j|\right)\left(\frac{|a_1|}{\sum_{j=1}^{\infty}|a_j|}
  \left|Z_{t-1}\right|^m+\frac{\sum_{j=2}^{\infty}|a_j|}{\sum_{j=1}^{\infty}|a_j|}\left(
  \frac{\sum_{j=2}^{\infty}|a_j||Z_{t-j}|}{\sum_{j=2}^\infty|a_j|}\right)^m\right)\\
  &=|a_0|\left|Z_t\right|^m+|a_1|\left|Z_{t-1}\right|^m+\left(\sum_{j=2}^{\infty}|a_j|\right)\left(\frac{|a_2||Z_{t-2}|+\sum_{j=3}^{\infty}|a_j||Z_{t-j}|}
  {\sum_{j=2}^{\infty}|a_j|}\right)^m\\
  &\leq\cdots\leq\sum_{j=0}^{\infty}|a_j|\left|Z_{t-j}\right|^m. \nonumber  
\end{split}
\end{equation}
Taking expectations on both sides, we obtain:
\begin{equation}
  E\left|X_t\right|^m\leq E|Z|^m<\infty\nonumber
\end{equation}
for any positive integer m. Now,
\begin{equation}
\begin{split}
  EX_t^2X_{t-k}^2
  &\leq E\left(\frac{X_t+X_{t-k}}{2}\right)^4
  =\frac{1}{16}E\left(\sum_{j=0}^{\infty}a_jZ_{t-j}+\sum_{j=0}^{\infty}a_jZ_{t-k-j}\right)^4\\
  &=\frac{1}{16}E\left(\sum_{j=0}^{\infty}a_jZ_{t-j}+\sum_{j=k}^{\infty}a_{j-k}Z_{t-j}\right)^4\\
  &=\frac{1}{16}E\left(\sum_{j=0}^{k-1}a_jZ_{t-j}+\sum_{j=k}^{\infty}(a_j+a_{j-k})Z_{t-j}\right)^4. \nonumber
\end{split}
\end{equation}
Obviously, $Y_t:=\sum_{j=0}^{k-1}a_jZ_{t-j}+\sum_{j=k}^{\infty}(a_j+a_{j-k})Z_{t-j}$ is also a causal linear process with innovation $\left\{Z_t\right\}$. Since $EZ_t^4<\infty$, it follows that $EY_t^4<\infty$. This proves the desired result. An alternative proof of this Lemma can be given by using a similar argument as in (\ref{e3}).\\
\end{proof}

{\bf Acknowledgement}\\
 The authors would like to thank the editor and the referees for carefully
reading the manuscript and for the suggestions that improved the
presentation.

\newpage
\begin{table}[ht]
\center
\caption{Comparison of MLE, LASSO PCMLE, and SCAD PCMLE for model (\ref{model_1}).}
\label{tab:AR5_1}
\bigskip
\begin{tabular}{crrrrrrrrrr}
\hline
 {\bf Lag} &  {\bf MLE} &  {\bf std} &     {\bf } & {\bf LASSO} &  {\bf std} &     {\bf } & {\bf SCAD} &  {\bf std} &     {\bf } & {\bf True} \\
\hline

         1 &     0.2067 &     0.0179 &            &     0.1947 &     0.0191 &            &     0.2015 &     0.0173 &            &        0.2 \\

         2 &     -0.008 &     0.0179 &            &          0 &          0 &            &          0 &          0 &            &          0 \\

         3 &     0.2191 &     0.0172 &            &      0.207 &     0.0183 &            &     0.2139 &     0.0166 &            &        0.2 \\

         4 &     -0.018 &     0.0182 &            &     -0.001 &      0.001 &            &          0 &          0 &            &          0 \\

         5 &     0.1757 &     0.0181 &            &     0.1637 &     0.0193 &            &     0.1705 &     0.0171 &            &        0.2 \\

           &            &            &            &            &            &            &            &            &            &            \\

{\bf error} &     0.0373 &            &            &     0.0373 &            &            &     0.0326 &            &            &            \\

{\bf $\lambda_N$} &          - &            &            &       0.02 &            &            &       0.08 &            &            &            \\

   {\bf a} &          - &            &            &          - &            &            &        2.1 &            &            &            \\
\hline
\end{tabular}  
\end{table}

\begin{table}[ht]
\center
\caption{Summary of 100 independent simulations for model (\ref{model_1}) with sample size 1000.}
\label{tab:AR5_100}
\bigskip
\begin{tabular}{rrrrrrr}
\hline
           &            & \multicolumn{ 2}{c}{{\bf LASSO}} &     {\bf } & \multicolumn{ 2}{c}{{\bf SCAD}} \\

           &            &     {\bf } &     {\bf } &     {\bf } &     {\bf } &     {\bf } \\

           &            & \multicolumn{ 2}{c}{{\bf probability of 2 zeros}} &     {\bf } & \multicolumn{ 2}{c}{{\bf probability of 2 zeros}} \\

           &            & \multicolumn{ 2}{c}{0.2} &     {\bf } & \multicolumn{ 2}{c}{0.61} \\
\hline
           &            &            &            &            &            &            \\

           &            & {\bf Probability of } & {\bf Average} &     {\bf } & {\bf Probability of } & {\bf Average} \\

 {\bf Lag} & {\bf TRUE} & {\bf  0 estimate} & {\bf bias} &     {\bf } & {\bf  0 estimate} & {\bf bias} \\
\hline
    {\bf } &     {\bf } &     {\bf } &     {\bf } &     {\bf } &     {\bf } &     {\bf } \\

         1 &        0.2 &       0.01 &     0.0298 &            &          0 &     0.0282 \\

         2 &          0 &       0.29 &     0.0197 &            &       0.72 &     0.0063 \\

         3 &        0.2 &          0 &     0.0309 &            &          0 &     0.0297 \\

         4 &          0 &       0.33 &     0.0206 &            &       0.79 &     0.0034 \\

         5 &        0.2 &          0 &     0.0250 &            &          0 &     0.0340 \\
\hline
\end{tabular}  
\end{table}

\newpage
\begin{table}[ht]
\center
\caption{Comparison of MLE, LASSO PMLE, and SCAD PMLE for  model (\ref{model_1}) with student T innovations.}
\label{tab:AR_T2}
\bigskip
\begin{tabular}{rrrrrrrrr}
\hline
  {\bf df} &  {\bf Lag} &  {\bf MLE} &     {\bf } & {\bf LASSO } &     {\bf } & {\bf SCAD } &     {\bf } & {\bf TRUE} \\
\hline
           &            &            &            &            &            &            &            &            \\

         2 &          1 &     0.1995 &            &     0.1978 &            &     0.1435 &            &        0.2 \\

           &          2 &    -0.0029 &            &          0 &            &          0 &            &          0 \\

           &          3 &     0.1801 &            &     0.1799 &            &     0.1642 &            &        0.2 \\

           &          4 &     0.0025 &            &     0.0018 &            &          0 &            &          0 \\

           &          5 &     0.1726 &            &     0.1713 &            &     0.1841 &            &        0.2 \\

           &            &            &            &            &            &            &            &            \\

           & {\bf error} & {\bf 0.0341} &     {\bf } & {\bf 0.0351} &            & {\bf 0.0687} &            &            \\

           &            &            &            &            &            &            &            &            \\

         5 &          1 &     0.1831 &            &     0.1806 &            &     0.1825 &            &        0.2 \\

           &          2 &    -0.0048 &            &          0 &            &          0 &            &          0 \\

           &          3 &     0.2161 &            &     0.2151 &            &     0.2159 &            &        0.2 \\

           &          4 &      0.027 &            &     0.0234 &            &     0.0203 &            &          0 \\

           &          5 &     0.1987 &            &     0.1978 &            &        0.2 &            &        0.2 \\

           &            &            &            &            &            &            &            &            \\

           & {\bf error} & {\bf 0.036} &     {\bf } & {\bf 0.034} &     {\bf } & {\bf 0.0311} &            &            \\
\hline
\end{tabular}  
\end{table}

\newpage
\begin{table}[ht]
\center
\caption{A comparison of forecasting performances of 3 methods: MLE, SCAD PCMLE, and MLE with an optimal order chosen by FPE.}
\label{tab:real2_forc}
\bigskip
\begin{tabular}{rrrrrrrrrr}
\hline
           &            &            &                                                 \multicolumn{ 7}{c}{Forecast Evaluation} \\

           &            &            &            &            &            &            &            &            &            \\

           &            &            &     1-step &     6-step &    12-step &            &     1-step &     6-step &    12-step \\

\multicolumn{ 2}{c}{Model} &            &        MAE &        MAE &        MAE &            &       RMSE &       RMSE &       RMSE \\
\hline
           &            &            &            &            &            &            &            &            &            \\

      p=30 &        MLE &            &     0.0506 &     0.0497 &     0.0478 &            &     0.3921 &      0.385 &     0.3704 \\

           &      SCAD  &            &     0.0414 &     0.0406 &     0.0392 &            &     0.3207 &     0.3145 &     0.3035 \\

           &            &            &            &            &            &            &            &            &            \\

      p=25 &        MLE &            &     0.0454 &     0.0445 &     0.0428 &            &     0.3515 &     0.3448 &     0.3319 \\

           &       SCAD &            &     0.0414 &     0.0406 &     0.0392 &            &      0.321 &     0.3148 &     0.3038 \\

           &            &            &            &            &            &            &            &            &            \\

      p=20 &        MLE &            &     0.0477 &     0.0469 &     0.0453 &            &     0.3694 &      0.363 &     0.3507 \\

           &      SCAD  &            &      0.046 &     0.0452 &     0.0435 &            &     0.3563 &       0.35 &     0.3372 \\

           &            &            &            &            &            &            &            &            &            \\

      p=24 &        FPE &            &     0.0461 &     0.0452 &     0.0435 &            &     0.3571 &     0.3503 &     0.3371 \\
\hline
\end{tabular}  
\end{table}

\newpage
\begin{table}[ht]
\center
\caption{Estimated values of the coefficients for all the combinations of models and methods.}
\label{tab:real2_esti}
\bigskip
\begin{tabular}{rrrrrrrrrrr}
\hline
           &                                                                                    \multicolumn{ 10}{c}{Estimated Coefficients} \\

           &            &            &            &            &            &            &            &            &            &            \\

           & \multicolumn{ 2}{c}{p=30} &            & \multicolumn{ 2}{c}{p=25} &            & \multicolumn{ 2}{c}{p=20} &            &       p=24 \\

       Lag &        MLE &       SCAD &            &        MLE &       SCAD &            &        MLE &       SCAD &            &        FPE \\
\hline
    {\bf } &     {\bf } &     {\bf } &            &     {\bf } &     {\bf } &            &     {\bf } &     {\bf } &            &     {\bf } \\

         1 &    -0.0486 &          0 &            &    -0.0458 &          0 &            &    -0.0549 &          0 &            &    -0.0431 \\

         2 &     0.0955 &     0.1031 &            &     0.0928 &     0.1054 &            &     0.0867 &     0.0966 &            &     0.0923 \\

         3 &      0.083 &     0.0819 &            &     0.0864 &      0.084 &            &     0.0948 &     0.0867 &            &     0.0864 \\

         4 &     0.0519 &          0 &            &     0.0465 &          0 &            &     0.0461 &          0 &            &     0.0475 \\

         5 &    -0.0212 &          0 &            &    -0.0236 &          0 &            &    -0.0225 &          0 &            &    -0.0239 \\

         6 &     0.0594 &          0 &            &     0.0636 &          0 &            &     0.0524 &     0.0639 &            &     0.0627 \\

         7 &    -0.0146 &          0 &            &    -0.0134 &          0 &            &    -0.0178 &          0 &            &     -0.015 \\

         8 &    -0.0021 &          0 &            &     0.0028 &          0 &            &     0.0081 &          0 &            &      0.003 \\

         9 &     0.0903 &     0.0969 &            &     0.0935 &     0.0964 &            &      0.092 &     0.0935 &            &     0.0944 \\

        10 &     0.0499 &          0 &            &     0.0535 &          0 &            &     0.0546 &      0.058 &            &     0.0535 \\

        11 &    -0.0031 &          0 &            &    -0.0009 &          0 &            &     0.0004 &          0 &            &    -0.0012 \\

        12 &     0.0432 &          0 &            &     0.0409 &          0 &            &     0.0318 &          0 &            &     0.0409 \\

        13 &     0.0087 &          0 &            &     0.0065 &          0 &            &      0.009 &          0 &            &     0.0056 \\

        14 &    -0.0124 &          0 &            &    -0.0115 &          0 &            &    -0.0109 &          0 &            &    -0.0116 \\

        15 &     0.0039 &          0 &            &     0.0034 &          0 &            &    -0.0114 &          0 &            &     0.0028 \\

        16 &    -0.0446 &          0 &            &    -0.0405 &          0 &            &    -0.0406 &          0 &            &     -0.042 \\

        17 &    -0.0093 &          0 &            &    -0.0025 &          0 &            &     0.0002 &          0 &            &    -0.0021 \\

        18 &     0.0827 &     0.0936 &            &     0.0865 &     0.0931 &            &     0.0736 &     0.0715 &            &     0.0867 \\

        19 &     0.0359 &          0 &            &      0.042 &          0 &            &      0.046 &          0 &            &     0.0409 \\

        20 &     0.0369 &          0 &            &     0.0429 &          0 &            &     0.0461 &          0 &            &     0.0435 \\

        21 &    -0.0416 &          0 &            &    -0.0425 &          0 &            &            &            &            &    -0.0433 \\

        22 &     0.0175 &          0 &            &      0.023 &          0 &            &            &            &            &     0.0217 \\

        23 &     0.0708 &     0.0768 &            &     0.0751 &     0.0768 &            &            &            &            &     0.0736 \\

        24 &    -0.1508 &    -0.1373 &            &    -0.1439 &    -0.1363 &            &            &            &            &    -0.1435 \\

        25 &    -0.0219 &          0 &            &    -0.0169 &          0 &            &            &            &            &            \\

        26 &     0.0368 &          0 &            &            &            &            &            &            &            &            \\

        27 &     0.0063 &          0 &            &            &            &            &            &            &            &            \\

        28 &     0.0519 &          0 &            &            &            &            &            &            &            &            \\

        29 &     0.0313 &          0 &            &            &            &            &            &            &            &            \\

        30 &     0.0069 &          0 &            &            &            &            &            &            &            &            \\
\hline
\end{tabular}  
\end{table}

%
%

\end{document}